\newcommand{\abs}[1]{\left\lvert#1\right\rvert} 
\newcommand{\norm}[1]{\left\lVert#1\right\rVert} 
\theoremstyle{plain}
\newtheorem{theorem}{Theorem}
\newtheorem{corollary}[theorem]{Corollary}
\newtheorem{lemma}[theorem]{Lemma}
\newtheorem{definition}[theorem]{Definition}  
\newtheorem{remark}[theorem]{Remark}  
\newtheorem*{remark*}{Remark}   
\renewcommand\qedsymbol{$\blacksquare$}
\newenvironment{proof-of}[1][{\hspace{-\blank}}]{{\medskip\noindent\textit{Proof~{#1}.\ }}}{\hfill\qedsymbol}
\renewcommand{\Tr}{{\operatorname{Tr}\,}}
\renewcommand{\Pr}{{\operatorname{Pr}}}
\newcommand{\id}{{\operatorname{id}}}
\newcommand{\1}{\openone}
\newcommand{\proj}[1]{|#1\rangle\!\langle #1|}
\newcommand{\cD}{{\mathcal{D}}}
\newcommand{\cE}{{\mathcal{E}}}
\newcommand{\cT}{{\mathcal{T}}}
\newcommand{\KI}{{\text{KI}}}
\newcommand{\nc}{\newcommand}
\nc{\rnc}{\renewcommand}
\nc{\avg}[1]{\langle#1\rangle}
\nc{\Rank}{\operatorname{Rank}}
\nc{\smfrac}[2]{\mbox{$\frac{#1}{#2}$}}
\nc{\ox}{\otimes}
\nc{\dg}{\dagger}
\nc{\dn}{\downarrow}
\nc{\cA}{{\cal A}}
\nc{\cB}{{\cal B}}
\nc{\cC}{{\cal C}}
\nc{\cF}{{\cal F}}
\nc{\cG}{{\cal G}}
\nc{\cH}{{\cal H}}
\nc{\cI}{{\cal I}}
\nc{\cJ}{{\cal J}}
\nc{\cK}{{\cal K}}
\nc{\cL}{{\cal L}}
\nc{\cM}{{\cal M}}
\nc{\cN}{{\cal N}}
\nc{\cO}{{\cal O}}
\nc{\cP}{{\cal P}}
\nc{\cQ}{{\cal Q}}
\nc{\cR}{{\cal R}}
\nc{\cS}{{\cal S}}
\nc{\cW}{{\cal W}}
\nc{\cX}{{\cal X}}
\nc{\cY}{{\cal Y}}
\nc{\cZ}{{\cal Z}}
\nc{\csupp}{{\operatorname{csupp}}}
\nc{\qsupp}{{\operatorname{qsupp}}}
\nc{\rar}{\rightarrow}
\nc{\lrar}{\longrightarrow}
\nc{\polylog}{{\operatorname{polylog}}}
\nc{\wt}{{\operatorname{wt}}}
\nc{\DS}{{\text{DS}}}
\nc{\RR}{{{\mathbb R}}}
\nc{\CC}{{{\mathbb C}}}
\nc{\FF}{{{\mathbb F}}}
\nc{\NN}{{{\mathbb N}}}
\nc{\ZZ}{{{\mathbb Z}}}
\nc{\PP}{{{\mathbb P}}}
\nc{\QQ}{{{\mathbb Q}}}
\nc{\UU}{{{\mathbb U}}}
\nc{\EE}{{{\mathbb E}}}
\nc{\Hom}[2]{\mbox{Hom}(\CC^{#1},\CC^{#2})}
\nc{\rU}{\mbox{U}}
\nc{\ob}[1]{#1}
\nc{\SEP}{{\text{SEP}}}
\nc{\NS}{{\text{NS}}}
\nc{\LOCC}{{\text{LOCC}}}
\nc{\PPT}{{\text{PPT}}}
\nc{\EXT}{{\text{EXT}}}
\nc{\Sym}{{\operatorname{Sym}}}
\nc{\ERLO}{{E_{\text{r,LO}}}}
\nc{\ERLOCC}{{E_{\text{r,LOCC}}}}
\nc{\ERPPT}{{E_{\text{r,PPT}}}}
\nc{\ERLOCCinfty}{{E^{\infty}_{\text{r,LOCC}}}}
\nc{\Aram}{{\operatorname{\sf A}}}
\begin{document}

\title{\LARGE{The Generalized Capacity of a Quantum Channel}}

\author{Zahra Baghali Khanian}
\affiliation{
Department of Mathematics, Technical University of Munich, 85748 Garching, Germany \\
\& Munich Center for Quantum Science and Technology 
}

\begin{abstract}
The transmission of classical information over a classical channel gave rise to the classical capacity theorem with the optimal rate in terms of the classical mutual information.
Despite classical information being a subset of quantum information, 
the rate of the quantum capacity problem is expressed in terms of the coherent information,
which does not mathematically generalize the classical mutual information. Additionally,  
there are multiple capacity theorems with distinct formulas when dealing with  transmitting information over a noisy quantum channel.  
This leads to the question of what constitutes a mathematically accurate quantum generalization of classical mutual information and whether there exists a quantum task that directly extends the classical capacity problem.
In this paper, we address these inquiries by introducing a quantity called the generalized information, which serves as a mathematical extension encompassing both classical mutual information and coherent information. We define a transmission task, which includes as specific instances both classical information and quantum information capacity problems, and show that the transmission capacity of this task is characterized by the generalized information. 
%
%
%
%
%
%
%
\end{abstract}
  
\maketitle
\section{Introduction and Background}

In the 90's people started to look for a quantum analogue of the classical mutual information
and a capacity problem which could give an operational meaning to this  quantity. 
Shor, in  his seminal paper \cite{decoherence-Shor-1995}, 
mentioned that quantum error correction is a step toward the quantum analog of channel
coding in classical information theory.
%
%
Then, in a series of papers the quantum analogue of the classical capacity theorem was
defined as transmitting entanglement, or equivalently a subspace, over a quantum channel, and
the coherent information was established as an upper bound on the quantum capacity \cite{Schumi1996,err-correction-Schumacher-1996,Barnum1998,fidelities-capacities-2000}, where
it was argued that even though the coherent information does not mathematically generalize the classical mutual information, since it appears in the entanglement transmission task and has similar properties such as obeying the data-processing inequality, it is a plausible candidate for the quantum analogue of the classical mutual information. 
The coherent information was established as a lower bound on the quantum capacity problem in \cite{Lloyd_capacity_97,Shor_direct_capacity2002,Devetak-capacity-2005}.

However,  this question still remained what a mathematically correct quantum generalization of the classical mutual information is, and whether there is quantum task which literally
generalizes the classical capacity problem? In this paper, we answer these questions
by introducing a new quantity in Definition~\ref{def: I_G}, which we call it the generalized information, and define a quantum task that not only does retrieve the classical and quantum capacity problems as special cases, it unifies them, despite them have been considered as distinct problems so far.
The generalized information, which is a  combination of the coherent information and the mutual information, was not  officially defined before, but it was previously observed by Devetak and Shor in \cite{Devetak-Shor-2005}.
They found an intriguing connection between the capacity region in \cite{Devetak-Shor-2005} and the findings of Shor in \cite{c_capacity_Shor2004} concerning the classical capacity of a quantum channel with limited entanglement assistance. The connection was that the sum of the classical and  quantum (or entanglement) rates in both tasks were the same, and it was equal to a quantity which we call it the generalized information.

The task, which gives operational meaning to the generalized information, is about transmitting 
${A'}^m$ part of $m$ copies of a general mixed state $\rho^{A'R}$ as shown in Fig~\ref{fig: diagram}. This task is related to simultaneous transmission of classical and quantum information defined in \cite{Devetak-Shor-2005} in that the correlations of system $A'$ with the reference system $R$ can be decomposed into classical and quantum correlations due to the Koashi-Imoto decomposition of a state $\rho^{A'R}$ \cite{KI2001,KI2002,Hayden2004}. 
Here, we briefly review this simultaneous transmission task. 

%


We first remind that the coherent information of a channel $\cN:A \to B$ for a state $\rho^{A}$ is defined as \cite{err-correction-Schumacher-1996}
\begin{align}
    I_c(\rho^{A},\cN):=I(R \> \rangle B )_{\sigma}=-S(R|B)_{\sigma}=S(B)_{\sigma}-S(BR)_{\sigma},
\end{align}
where the  quantities are in terms of the state $\sigma^{BR}=(\cN\ox \id_R)\ketbra{\rho}^{AR}$, and $\ket{\rho}^{AR}$ is a purification of $\rho^{A}$.
Also, the Holevo information of a channel $\cN:A \to B$ for a cq state $\rho^{AX}=\sum_x p_x \rho_x^A \ox \proj{x}^X$ is defined as \cite{Holevo1973}
\begin{align}
    \chi(\rho^{AX},\cN):=I(B:X)_{\tau},
\end{align}
where the quantum mutual information is in terms of the state $\tau^{BX}=(\cN\ox \id_X) {\rho}^{AX}$.

The simultaneous transmission of classical and quantum information is 
a communication task considered in \cite{Devetak-Shor-2005}. 
The aim of this task is to  transmit system $A'$ of is a maximally entangled state 
$\ket{\Phi_{\kappa}}^{A'R}$ of dimension $\kappa$, and a classical message $M$ from a set 
 $\{1,2,\cdots,\mu\}$.
An $(n,\epsilon)$ code for this task consists of an encoder $\cE_m: A' \to A^n$,
a CPTP map depending on a message $m$, and 
a quantum instrument decoder, i.e. a set of CP maps $\{\cD_m\}_{m\in [\mu]}$  
with a trace-preserving sum $\cD=\sum_m \cD_m$. The instrument has one quantum input and two outputs to detect the classical and quantum messages.
The encoding and decoding operations satisfy
\begin{align}\label{eq: DS F}
    F\left(\Phi_{\kappa}^{A'R}, (\cD\circ\cN^{\otimes n}\circ \cE_m) \Phi_{\kappa}^{A'R} \right)\geq 1-\epsilon, \quad \quad \text{and}\quad \quad 
    \Pr \left(\hat{M} \neq M \right)\leq \epsilon   \quad \forall m,
\end{align}
where $\hat{M}$ is the decoded classical message. 
The classical and quantum rates of this code are defined as $R_C=\frac{\log \mu}{n}$ and $R_Q=\frac{\log \kappa}{n}$, respectively.

\begin{figure}[t] 
  \includegraphics[width=0.5\textwidth]{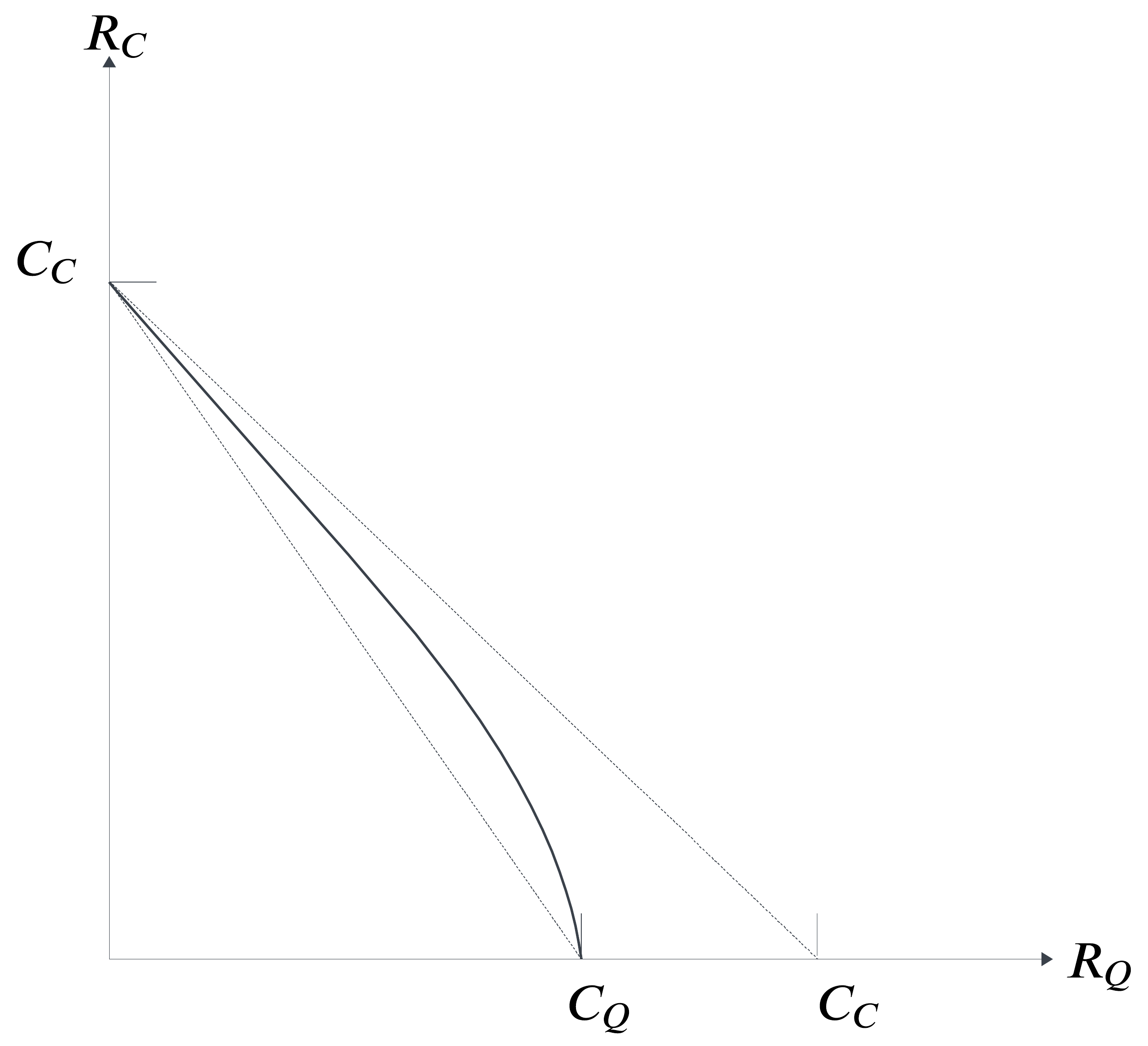} 
  \caption{The curve (solid line) is the trade-off curve of the simultaneous transmission of the  classical and quantum information over a quantum channel obtained by Devetak and Shor
  in \cite{Devetak-Shor-2005}. We call this curve the Devetak-Shor curve, and denote it by 
  $f_{\DS}(\cdot)$.
  The line $R_C=C_C-R_Q$ is an outer bound since the transmitted quantum entanglement may always be used to encode classical information at 1 cbit/qubit. The line $R_C=C_C-\frac{C_C}{C_Q}R_Q$ is an inner bound which can be achieved by time-sharing strategy.
  $C_C$ and $C_Q$ denote the classical capacity and the quantum capacity, respectively}
  \label{fig: DS curve}
\end{figure}

The trade-off  between the  classical and quantum rates
is stated in Theorem~\ref{Thm: Devetak-Shor} and  is illustrated in Fig.~\ref{fig: DS curve}. 
The trade-off curve is sandwiched between two lines $R_C=C_C-R_Q$ and $R_C=C_C-\frac{C_C}{C_Q}R_Q$
where $C_C$ and $C_Q$ are the classical and quantum capacity of the channel $\cN$, respectively. The former line is an outer bound since the transmitted quantum entanglement may always be used to encode classical information at 1 cbit/qubit.
The latter line is an inner bound which can be achieved by time-sharing strategy. 
We also remind that $C_Q \leq C_C$ holds for any quantum channel. 
%
%
%
\begin{theorem}\label{Thm: Devetak-Shor}\cite{Devetak-Shor-2005}
\noindent For the simultaneous transmission of classical and quantum information, defined as above, 
any rate pair $(R_Q,R_C)$ is asymptotically achievable if and only if 
\begin{align}
    R_C &\leq \lim_{l \to \infty} \frac{I(B^l : X)_{\sigma}}{l} \nonumber\\
    R_Q &\leq  \lim_{l \to \infty} \frac{I(R\>\rangle B^l X)_{\sigma}}{l}, \nonumber
\end{align}
and the above quantities are in terms of any state $\sigma$ of the following form 
\begin{align*} 
    \sigma^{B^lRX}&= (\cN^{\ox l}\ox \id_{RX})\varphi^{A^lRX} \quad \quad \text{and} \\
    \varphi^{A^lRX}&=\sum_{x} p(x)\proj{\varphi_{x}}^{A^lR}\ox \proj{x}^{X} , 
\end{align*}
where $\varphi_{x}^{A^l}$ is a state supported on $\cH_A^{\ox l}$ with a purification 
$\ket{\varphi_{x}}^{A^lR}$, and the dimension  of the system $X$ is bounded as $|X| \leq |\cH_A|^2+2$.
The capacity region is the union of all rate pairs $(R_Q,R_C)$ satisfying the above equalities. 
\end{theorem}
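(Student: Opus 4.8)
The plan is to establish the two directions separately: the converse (every achievable pair satisfies the displayed inequalities for some admissible $\sigma$) and the direct part (every pair satisfying them is achievable).

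\emph{Converse.} Fix an $(n,\epsilon)$ code. Take the classical message $M$ uniformly distributed on $[\mu]$, keep a copy of it in a classical register $X$, and send half of $\Phi_\kappa^{A'R}$ through the message-dependent encoder $\cE_M$. Let $\omega^{B^nRX}$ be the state obtained after $\cN^{\ox n}$ acts; it has exactly the form required in the theorem, with $l=n$. For the classical rate, combine the Holevo bound with Fano's inequality applied to the Markov structure $X\!-\!B^n\!-\!\hat M$: since $\Pr(\hat M\neq M)\le\epsilon$, one gets $(1-\epsilon)\log\mu\le I(X:B^n)_\omega+1$, hence $R_C\le\frac1n I(X:B^n)_\omega+\delta(\epsilon)$ with $\delta(\epsilon)\to 0$ as $\epsilon\to0$. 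For the quantum rate, use the standard link between entanglement fidelity and coherent information: the decoding condition $F(\Phi_\kappa^{A'R},(\cD\circ\cN^{\ox n}\circ\cE_m)\Phi_\kappa^{A'R})\ge 1-\epsilon$ together with the data-processing inequality for coherent information and a continuity estimate (Alicki--Fannes / Fannes--Audenaert for the conditional entropy $S(R|B^nX)$) yields $\log\kappa\le I(R\>\rangle B^nX)_\omega+n\delta'(\epsilon)$. Thus both constraints hold, up to vanishing corrections, for the single state $\omega$ with $l=n$. Letting $\epsilon\to0$, and then reducing $|X|$ by a Fenchel--Eggleston--Carath\'eodory argument applied to the affine functionals that $p\mapsto I(X:B^l)$ and $p\mapsto I(R\>\rangle B^lX)$ induce on ensembles $\{p(x),\varphi_x^{A^lR}\}$, gives the stated cardinality bound and completes the converse.

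\emph{Direct part.} Fix $l$ and a state $\sigma$ of the stated form, arising from an ensemble $\{p(x),\varphi_x^{A^l}\}$. Use the channel $n=l\,l'$ times, organized into $l'$ blocks of $l$ uses, and code in two superposed layers. In the classical (outer) layer, draw a random codebook $\{x^{l'}(m)\}_{m\in[\mu]}$, each letter i.i.d.\ from $p$; by the Holevo--Schumacher--Westmoreland theorem applied to the induced output ensemble, a decoding measurement recovers the transmitted $x^{l'}(m)$ with vanishing error, provided $\frac1n\log\mu\le\frac1l I(X:B^l)_\sigma-\delta$. In the quantum (inner) layer, conditioned on the classical codeword sequence $x^{l'}$, the $i$-th block implements a purification of the fixed state $\varphi_{x_i}^{A^l}$; applying the Lloyd--Shor--Devetak random-coding/decoupling construction to this known sequence of channels transmits entanglement at rate $\frac1l\sum_x p(x)\,I(R\>\rangle B^l)_{\varphi_x}-\delta$ per block, and a one-line computation gives $\sum_x p(x)\,I(R\>\rangle B^l)_{\varphi_x}=I(R\>\rangle B^lX)_\sigma$. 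Hence the pair $\bigl(\tfrac1l I(X:B^l)_\sigma,\tfrac1l I(R\>\rangle B^lX)_\sigma\bigr)$ is achievable up to $\delta$; taking $l\to\infty$, the union over all admissible $\sigma$, and the closure produces every pair in the claimed region, while any pair dominated by an achievable one is obtained by discarding resources.

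\emph{Main obstacle.} The delicate point is the coherence of the two layers. The receiver must extract the classical codeword $x^{l'}$ \emph{before} running the quantum decoder (which depends on it), yet this extraction must not disturb the very entanglement it is meant to protect. I would resolve this exactly as in Devetak's quantum-capacity proof: implement the classical decoding coherently and invoke the Gentle Measurement Lemma to bound the disturbance caused by identifying $x^{l'}$ on the global purification, and only then apply the decoupling-based quantum decoder conditioned on $x^{l'}$. Carefully tracking how the classical error, the block-wise decoupling errors, and the gentle-measurement disturbance accumulate over the $l'$ blocks is where the real work lies; the cardinality reduction in the converse is a secondary, purely convex-geometric issue dispatched by Carath\'eodory's theorem.
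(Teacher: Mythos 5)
This statement is quoted from the literature: the paper does not prove Theorem~\ref{Thm: Devetak-Shor} at all, it imports it wholesale from \cite{Devetak-Shor-2005}, so there is no in-paper proof to compare against. Your outline is, in essence, a faithful reconstruction of the original Devetak--Shor argument: Holevo--Fano for the classical converse, entanglement-fidelity continuity plus data processing for the quantum converse, and a two-layer superposition code (HSW outer layer, LSD/decoupling inner layer decoded coherently via the gentle measurement lemma) for achievability. You also correctly identify the genuinely delicate point, namely the non-disturbing coherent extraction of the classical codeword before the conditional quantum decoding.

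One imprecision worth flagging in your converse: the state you construct by sending half of $\Phi_\kappa^{A'R}$ through the CPTP encoder $\cE_M$ does \emph{not} ``have exactly the form required in the theorem,'' because $(\cE_m\ox\id_R)\Phi_\kappa^{A'R}$ is in general a mixed state on $A^nR$, whereas the theorem requires an ensemble of \emph{pure} states $\proj{\varphi_x}^{A^lR}$. You need the standard refinement step: spectrally decompose each mixed ensemble member, append the eigenvalue index to the classical register $X$, and check that both $I(X:B^n)$ and $I(R\>\rangle B^nX)$ can only increase under this refinement (by data processing and strong subadditivity, respectively). This is exactly the manipulation the present paper carries out in its own converse for the generalized capacity (the register $J$ around Eq.~(\ref{eq: big Xi})), so it is routine, but as written your converse does not land in the admissible class of states. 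With that repair, and the Carath\'eodory reduction of $|X|$ you already mention, the sketch is sound.
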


This paper is organized as follows. In the remainder of this section, we introduce some notation and convention that will be used throughout the paper. In Sec.~\ref{sec: transmission task}, we  define the quantum information transmission task and draw comparisons with the classical and quantum capacity problems. Sec.~\ref{sec: Main Results}
presents our main findings, establishing the generalized information as the capacity of the transmission task defined in the previous section.
In Sec.~\ref{sec: converse} and Sec.~\ref{sec: direct proof}, we prove the converse and achievability bounds, respectively. We further elaborate on our results in Sec.~\ref{sec: Discussion}. Finally, in the appendix we provide proofs and facts that are not incorporated within the main body of the paper.

\bigskip

\textbf{Notation.}
In this paper, quantum systems are associated with finite dimensional Hilbert spaces $A$, $R$, etc.,
whose dimensions are denoted by $|A|$, $|R|$, respectively. Since it is clear from the context, we slightly abuse the notation and let $Q$ denote both a quantum system and a quantum rate. 
The von Neumann entropy is defined as
\begin{align}
    S(\rho) := - \Tr\rho\log\rho \nonumber
\end{align}
Throughout this paper, $\log$ denotes by default the binary logarithm.
%
%
%
The conditional entropy and the conditional mutual information, $S(A|B)_{\rho}$ and $I(A:B|C)_{\rho}$, respectively, are defined in the same way as their classical counterparts: 
\begin{align*}
  S(A|B)_{\rho}   &= S(AB)_\rho-S(B)_{\rho}, \text{ and} \\ 
  I(A:B|C)_{\rho} &= S(A|C)_\rho-S(A|BC)_{\rho} \\
                    &= S(AC)_\rho+S(BC)_\rho-S(ABC)_\rho-S(C)_\rho.
\end{align*}
The fidelity between two states $\rho$ and $\xi$ is defined as 
\(
 F(\rho, \xi) = \|\sqrt{\rho}\sqrt{\xi}\|_1 
                = \Tr \sqrt{\rho^{\frac{1}{2}} \xi \rho^{\frac{1}{2}}},
\) 
with the trace norm $\|X\|_1 = \Tr|X| = \Tr\sqrt{X^\dagger X}$.
It relates to the trace distance by the Fuchs-van de Graaf inequality as follows \cite{Fuchs1999}:
\begin{equation}\label{eq: Fuchs-van de Graaf}
  1-F(\rho,\xi) \leq \frac12\|\rho-\xi\|_1 \leq \sqrt{1-F(\rho,\xi)^2}.
\end{equation}
When there are $n$ copies of a state, we use the following notation for systems and indices 
\begin{align*}
  x^n              &= x_1 x_2 \ldots x_n, \\
  \ket{x^n}        &= \ket{x_1} \ket{x_2} \cdots \ket{x_n}, \\
  p(x^n)           &= p(x_1) p(x_2)  \cdots p(x_n), \text{ and} \\
  \ket{\psi_{x^n}}^{A^n} &= \ket{\psi_{x_1}}^{A_1} \ket{\psi_{x_2}}^{A_2} \cdots \ket{\psi_{x_n}}^{A_n}.
\end{align*}

\section{The transmission task}\label{sec: transmission task}

In this section, we introduce a transmission task and call it the generalized capacity problem.
We assume that a source generates $m$ copies of a general mixed state ${\rho}^{A'R}$, and distributes them among an encoder, called Alice, and a reference system, who hold systems $A'$ and $R$, respectively. 
Alice applies an encoding operation, a CPTP map, $\cE_n:{A'}^m \to A^n$ to obtain a state
supported on ${\cH_A}^{\ox n}$, and then transmits this state to Bob via $n$ independent uses
of a channel $\cN:A \to B$. Bob receives a state supported on ${\cH_B}^{\ox n}$,
and applies a decoding operation $\cD_n:B^n \to {A'}^m$ to reconstruct system ${A'}^m$ while
preserving the correlations with system $R^m$. Our goal is to transmit maximal number of copies (m) of ${\rho}^{A'R}$ via $n$ uses of the channel $\cN$.
This task is depicted in Fig.~\ref{fig: diagram}.


The goal of the transmission task is well-defined and clear, however, we still need to
define a capacity quantity to establish connections with other capacity problems. One candidate
would be to define the capacity as $mS(A')/n$ similar to the original definition of the entanglement transmission capacity in \cite{fidelities-capacities-2000} defined as the entropy rate of the entanglement being transmitted over a channel.
The problem with this definition is that contrary to pure states, the entropy content 
of a mixed state can vary a lot depending on the size of the redundant system
in a Koashi-Imoto decomposition of the state \cite{KI2001,KI2002}.
Namely, according to the Koashi-Imoto decomposition of a state $\rho^{A'R}$, the correlations of system $A'$ with system $R$ can be categorized by decomposing system
$A'$, via an isometry, to three subsystems $C$, $N$ and $Q$ where subsystem
$C$ is only classically correlated with system $R$ and subsystem $Q$ is entangled 
with system $R$, however, system $N$ is a redundant system in the sense that
given the classical subsystem $C$, the reference $R$ is decoupled from subsystem $N$.
Therefore, for various mixed states correlated with system $R$ with same amount of classical and quantum correlations, the entropy content of the system $A'$ can vary a lot depending the entropy of the redundant subsystem $N$. 
This motivates us to define the generalized capacity as $mS(CQ)_{\omega}/n$, i.e. the entropy of the classical and quantum subsystems $C$ and $Q$ of the Koashi-Imoto decomposition.

\begin{figure}[t] 
  \includegraphics[width=0.92\textwidth]{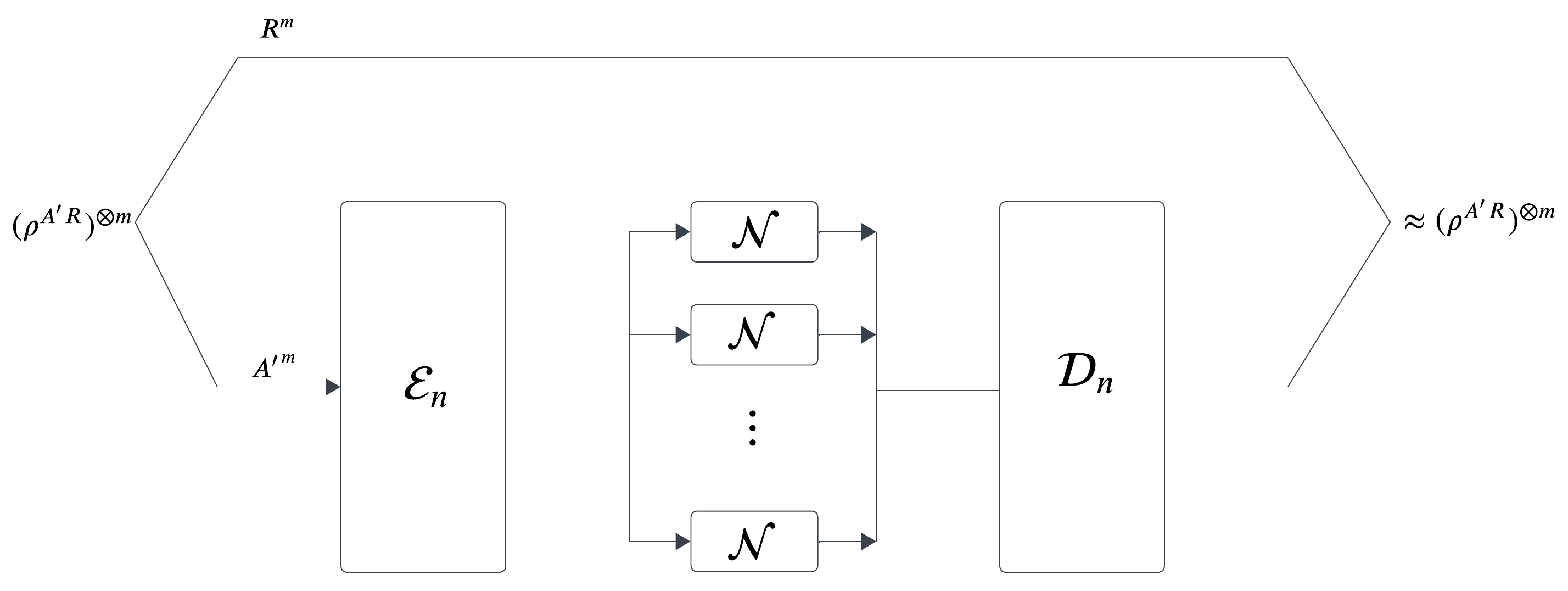} 
  \caption{The transmission task of the generalized capacity problem}
  \label{fig: diagram}
\end{figure}

We rigorously define the generalized capacity problem in Definition~\ref{def: capacity}.
Here we first review the Koashi-Imoto (KI) decomposition. 
This decomposition characterizes the correlations of system $A'$ with system $R$, and 
relates the nature of these correlations to the structure of CPTP maps which act on system $A'$ of a state $\rho^{A'R}$ and preserve the whole state, namely, a CPTP map $\Lambda:A'\to A'$ acting as $(\Lambda\ox \id_R)\rho^{A' R}=\rho^{A' R}$.
In summary, the KI decomposition states that there is a decomposition of the corresponding Hilbert space of $A'$ into three subspaces $C$, $N$ and $Q$, where a part of the state on subspace $C$ is classically correlated with $R$, a part of it on subspace $Q$
is entangled with the reference system $R$. The other part of the state on subspace $N$ is merely a redundant system. Then, any CPTP map $\Lambda$ with the aforementioned property acts non-trivially only on system $N$, and as an identity on the other systems. 

\begin{theorem}[\cite{KI2002,Hayden2004}]
\label{thm: KI decomposition}
Associated to the state $\rho^{AR}$, there are Hilbert spaces $C$, $N$ and $Q$
and an isometry $U_{\KI}:A \hookrightarrow C N Q$ such that:
\begin{enumerate}[(i)]
  \item The state $\rho^{AR}$ is transformed by $U_{\KI}$ as
    \begin{align}
        \label{eq:KI state}
      (U_{\KI}\!\otimes\! \1_R)\rho^{AR} (U_{\KI}^{\dagger}\!\otimes\! \1_R)
        \!\!= \!\!\sum_c p_c \proj{c}^{C}\!\! \otimes \mu_c^{N} \otimes \omega_c^{Q R} 
        =:\omega_{\rho}^{C N Q R},
    \end{align}
    where the set of vectors $\{ \ket{c}^{C}\}$ form an orthonormal basis for Hilbert space 
    $C$, and $p_c$ is a probability distribution over $c$. The states $\mu_c^{N}$ and 
    $\omega_c^{Q R}$ act  on the Hilbert spaces $N$ and $Q \otimes R$, respectively.

  \item For any CPTP map $\Lambda$ acting on system $A$ which leaves the state $\rho^{AR}$ 
    invariant, that is $(\Lambda \otimes \id_R )\rho^{AR}=\rho^{AR}$, every associated 
    isometric extension $U: A\hookrightarrow A E$ of $\Lambda$ with the environment system 
    $E$ is of the following form
    \begin{equation}
      U = (U_{\KI}\otimes \1_E)^{\dagger}
            \left( \sum_c \proj{c}^{C} \otimes U_c^{N} \otimes \1_c^{Q} \right) U_{\KI},
    \end{equation}
    where the isometries $U_c:N \hookrightarrow N E$ satisfy 
    $\Tr_E [U_c \omega_c U_c^{\dagger}]=\omega_c$ for all $c$.
    The isometry $U_{KI}$ is unique (up to a trivial change of basis of the Hilbert spaces 
    $C$, $N$ and $Q$). Henceforth, we call the isometry $U_{\KI}$ and the state 
    $\omega^{C N Q R}$ defined by Eq.~(\ref{eq:KI state})
    the Koashi-Imoto (KI) isometry and KI-decomposition of the state $\rho^{AR}$, respectively. 

  \item In the particular case of a tripartite system $CNQ$ and a state $\omega^{CNQR}$ already 
    in Koashi-Imoto form (\ref{eq:KI state}), property 2 says the following:
    For any CPTP map $\Lambda$ acting on systems $CNQ$ with 
    $(\Lambda \otimes \id_R )\omega^{CNQR}=\omega^{CNQR}$, every associated 
    isometric extension $U: CNQ\hookrightarrow CNQ E$ of $\Lambda$ with the environment system 
    $E$ is of the form
    \begin{equation}
      U = \sum_c \proj{c}^{C} \otimes U_c^{N} \otimes \1_c^{Q},
    \end{equation}
    where the isometries $U_c:N \hookrightarrow N E$ satisfy 
    $\Tr_E [U_c \omega_c U_c^{\dagger}]=\omega_c$ for all $c$.
\end{enumerate} 
\end{theorem}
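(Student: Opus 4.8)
The plan is to read off the decomposition from the structure of the set $\cF$ of CPTP maps $\Lambda\colon A\to A$ obeying $(\Lambda\otimes\id_R)\rho^{AR}=\rho^{AR}$. The first step is to make this condition intrinsic to $A$: choosing an orthonormal basis $\{\ket{i}\}$ of $\operatorname{supp}(\rho^R)\subseteq\cH_R$ and setting $X_{ij}:=(\1_A\otimes\bra{i})\,\rho^{AR}\,(\1_A\otimes\ket{j})\in\cB(\cH_A)$, the invariance $(\Lambda\otimes\id_R)\rho^{AR}=\rho^{AR}$ is equivalent to $\Lambda(X_{ij})=X_{ij}$ for all $i,j$, that is, to $\Lambda$ fixing every element of the self-adjoint operator system $\mathcal{V}:=\operatorname{span}\{X_{ij}\}$ (self-adjoint because $X_{ij}^{\dagger}=X_{ji}$), which contains $\rho^A=\sum_i X_{ii}$. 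Since every $X_{ij}$ is supported on $\operatorname{supp}(\rho^A)$, I would work on $\cH_{A,0}:=\operatorname{supp}(\rho^A)$, where $\rho^A$ is faithful and is a common invariant state of the maps in $\cF$; the freedom in defining $U_{\KI}$ on the orthogonal complement of $\cH_{A,0}$ in $\cH_A$ handles the passage back to the full space.

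The second step isolates one canonical element of $\cF$. Since $\cF$ is compact, convex, contains $\id$, and is closed under composition, a standard averaging/Ces\`aro argument yields $\Lambda_0\in\cF$ whose fixed-point space is the smallest possible, $\operatorname{Fix}(\Lambda_0)=\bigcap_{\Lambda\in\cF}\operatorname{Fix}(\Lambda)=:\cW\supseteq\mathcal{V}$ (the common faithful invariant state $\rho^A$ is what forces the averaged map down to this minimal fixed space). I would then invoke the structure theory of fixed-point sets of quantum channels possessing a faithful invariant state, which is the algebraic core of the Koashi-Imoto result \cite{KI2002,Hayden2004}: there exist an orthogonal decomposition $\cH_{A,0}=\bigoplus_c \cH_{Q_c}\otimes\cH_{N_c}$ and faithful states $\mu_c$ on $\cH_{N_c}$ such that $\operatorname{Fix}(\Lambda_0)=\bigoplus_c \cB(\cH_{Q_c})\otimes\CC\,\mu_c$, and moreover every $\Lambda\in\cF$ (having $\operatorname{Fix}(\Lambda)\supseteq\cW$) respects this block structure, acts as the identity on each $\cH_{Q_c}$, and acts on $\cH_{N_c}$ by a channel that fixes $\mu_c$. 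Establishing this block-tensor form of $\operatorname{Fix}(\Lambda_0)$ — and tracking the support restriction through to the final isometry — is the step I expect to be the main obstacle.

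The third step brings $R$ back in. Because $\mathcal{V}\subseteq\operatorname{Fix}(\Lambda_0)=\bigoplus_c \cB(\cH_{Q_c})\otimes\CC\,\mu_c$, each $X_{ij}$ has the form $X_{ij}=\bigoplus_c Y^{c}_{ij}\otimes\mu_c$ with $Y^{c}_{ij}\in\cB(\cH_{Q_c})$. Reassembling $\rho^{AR}=\sum_{ij}X_{ij}\otimes\ket{i}\!\bra{j}$, with $\ket{i}\!\bra{j}$ understood on $R$, then gives $\rho^{AR}=\bigoplus_c p_c\,\omega_c^{Q_cR}\otimes\mu_c^{N_c}$ once each nonnegative block is collected and normalised as $p_c\,\omega_c^{Q_cR}$. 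Recording the block label $c$ in a classical register $C$ and padding the families $\{\cH_{Q_c}\}$, $\{\cH_{N_c}\}$ into fixed spaces $Q$, $N$ produces the isometry $U_{\KI}\colon A\hookrightarrow C N Q$ carrying $\rho^{AR}$ to the form of Eq.~(\ref{eq:KI state}); this is (i). Uniqueness up to a trivial change of basis of $C$, $N$, $Q$ follows from the canonicity of $\cW=\operatorname{Fix}(\Lambda_0)$ and of its block-tensor decomposition. (Alternatively, (i) can be obtained by the original, more hands-on route of analysing the ensemble $\{\,p_i:=\bra{i}\rho^R\ket{i},\ \rho^R_i\,\}$ of $R$-conditioned states of $A$ directly.)

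Finally, for (ii): take any $\Lambda\in\cF$ and any isometric Stinespring dilation $U\colon A\hookrightarrow A E$ of it. The facts above — $\Lambda$ is block-diagonal, equals the identity on each $\cH_{Q_c}$, and $\rho^A$ is faithful — force $U$ into the form $(U_{\KI}\otimes\1_E)^{\dagger}\big(\sum_c \proj{c}^{C}\otimes U_c^{N}\otimes\1_c^{Q}\big)U_{\KI}$ with isometries $U_c\colon N\hookrightarrow N E$, and the residual content of $(\Lambda\otimes\id_R)\rho^{AR}=\rho^{AR}$, read off block by block, is precisely $\Tr_E[U_c\,\mu_c\,U_c^{\dagger}]=\mu_c$ for every $c$. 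Part (iii) is simply (ii) applied when the state is already presented in Koashi-Imoto form, so that $U_{\KI}=\1$.
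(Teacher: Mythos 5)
The paper does not prove Theorem~\ref{thm: KI decomposition}: it is imported verbatim from \cite{KI2002,Hayden2004} as a known result, so there is no in-paper argument to compare yours against. Judged against the cited literature, your sketch follows the operator-algebraic route of Hayden--Jozsa--Petz--Winter rather than Koashi and Imoto's original iterative construction: reduce the invariance of $\rho^{AR}$ to the fixing of the operator system $\mathcal{V}=\operatorname{span}\{X_{ij}\}$, average the family $\cF$ down to a single channel $\Lambda_0$ with minimal fixed-point set, invoke the structure theorem $\operatorname{Fix}(\Lambda_0)=\bigoplus_c \cB(\cH_{Q_c})\otimes\CC\,\mu_c$ for channels with a faithful invariant state, and read off the decomposition of $\rho^{AR}$ block by block. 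This is the correct and standard modern proof strategy, and your reduction steps (the equivalence with $\Lambda(X_{ij})=X_{ij}$, the restriction to $\operatorname{supp}(\rho^A)$, the positivity and factorisation of each block) are all sound.

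Two places where the sketch leans on unproved assertions deserve to be made explicit. First, the entire algebraic content of the theorem sits inside the fixed-point structure theorem, which you cite rather than prove; you acknowledge this, but as written the argument is a reduction to that theorem, not a self-contained proof. Second, in part (ii) the claim that \emph{every} isometric extension $U$ of $\Lambda$ (not merely a minimal Stinespring dilation) takes the form $(U_{\KI}\otimes\1_E)^{\dagger}\bigl(\sum_c \proj{c}\otimes U_c\otimes\1_c\bigr)U_{\KI}$ needs an argument: the standard one is that $\Lambda$ restricted to each block acts as the identity on $\cB(\cH_{Q_c})$, so by the multiplicative-domain (equality-in-Schwarz) criterion every Kraus operator of $\Lambda$ commutes with $\bigoplus_c\cB(\cH_{Q_c})\otimes\1_{N_c}$ and hence lies in $\bigoplus_c \1_{Q_c}\otimes\cB(\cH_{N_c})$, from which the stated form of $U$ and the condition $\Tr_E[U_c\mu_c U_c^{\dagger}]=\mu_c$ follow. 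With those two points filled in, the proposal matches the proof in \cite{Hayden2004}.
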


\bigskip

\begin{definition}[The generalized capacity problem Fig.~\ref{fig: diagram}] \label{def: capacity}
For a quantum channel $\cN: A \to B$, a state $\rho^{A'R}$, $\epsilon > 0$ and $n\geq 1$, we define 
\begin{align}
    C_G(\rho^{A'R},n,\epsilon,\cN):=&  \max_{m} \frac{mS(CQ)_{\omega}}{n} \>\> \text{s.t.} 
 \nonumber\\ 
 & \exists \>\cE_n,\cD_n \quad 
      \text{with} \quad F\left((\rho^{A'R})^{\ox m},(\cD_n\circ\cN^{\otimes n}\circ \cE_n \otimes \id_{R^m} ) (\rho^{A'R})^{\ox m}\right)\geq 1-\epsilon, \nonumber
\end{align}
where $\rho^{A'R}$ is a general mixed quantum state with system $A'$ as the system to be transmitted through the channel and system $R$ as an inaccessible reference system. 
%
The maps $\cE_n: {A'}^m \to A^n$ and $\cD_n: B^n \to {A'}^{m}$ are respectively the encoding and decoding operations. We define the generalized capacity of a quantum channel $\cN$ 
for transmitting the state $\rho^{A'R}$ as
\begin{align}
 C_G(\rho^{A'R},\cN):=\inf_{\epsilon >0} \liminf_{n\to \infty} C_G(\rho^{A'R},n,\epsilon,\cN). \nonumber
\end{align}
\end{definition}


\begin{remark}\label{remark: special cases}
The classical capacity of a quantum channel \cite{Holevo1973} and the entanglement transmission capacity problem \cite{fidelities-capacities-2000,Devetak-capacity-2005} are special cases the generalized capacity problem of Definition~\ref{def: capacity}.
Namely, if we consider that the systems $A'$ and $R$ are both classical as $\rho^{A'R}=\sum_{l=1}^L \frac{1}{L}\proj{l}^{A'}\ox \proj{l}^R$, then $m$ copies of this state is $(\rho^{A'R})^{\ox m}=\sum_{l^m} \frac{1}{L^m}\proj{l^m}^{{A'}^m} \ox \proj{l^m}^{R^m}$.
In this case, preserving the fidelity of systems ${A'}^mR^m$  reduces to  preserving the average fidelity of system  ${A'}^m$, which follows from the definition of the fidelity.   
Hence, transmitting system ${A'}^m$ of this source is equivalent to transmitting 
a classical message from a set of $L^m$ number of classical messages over the channel. 
Moreover, if the shared state is pure with Schmidt decomposition $\ket{\rho}^{A'R}=\sum_{l=1}^L \frac{1}{\sqrt{L}} \ket{e_l}^{A'}\ox \ket{e_l}^R$, then 
$m$ copies of this state is $(\ket{\rho}^{A'R})^{\ox m}=\sum_{l^m=1}^{L^m}\frac{1}{\sqrt{L^m}}\ket{e_{l^m}}^{{A'}^m}\ox \ket{e_{l^m}}^{{R}^m}$.
Transmitting  ${A'}^m$ part of the state over $\cN^{\ox n}$ is 
equivalent to the task of transmitting entanglement over $\cN^{\ox n}$ (up to relabeling the local basis of ${A'}^m$ and $R^m$). 
%
%
\end{remark}

In order to make the problem technically easier to deal with, we introduce the notion of
equivalent sources and discuss that the task of transmitting $\rho^{A'R}$
is equivalent to the task transmitting $\omega_{\rho}^{CQR}$, i.e. the KI-decomposition of the state.
Namely, we say a given source $\omega^{KR}$ is equivalent to a source $\rho^{A'R}$ if there are
CPTP maps $\cT:A' \to K$ and $\cR:K \to A'$ in both directions
taking one to the other: 
\begin{align} \label{def: equivalent sources}
    \omega^{KR}=(\cT \otimes \id_R) \rho^{A'R} \text{ and } 
    \rho^{A'R}=(\cR \otimes \id_R) \omega^{KR}.
\end{align}
For equivalent sources the generalized capacities are equal, namely
$C_G(\rho^{A'R},n, \epsilon,\cN)=C_G(\omega^{KR},n, \epsilon,\cN)$
holds.
%
%
This follows because
for any code $(\cE_n,\cD_n)$ of block length $n$ and error $\epsilon$ for $(\rho^{A'R})^{\ox m}$, 
concatenating the encoding and decoding operations with $\cT$ and $\cR$, i.e. letting
$\cE'_n=\cE_n\circ\cR^{\otimes m}$ and $\cD'_n=\cT^{\otimes m}\circ\cD_n$, we get a code 
of the same error $\epsilon$ for $(\omega^{KR})^{\ox m}$. Analogously we can turn a code for 
$\omega^{CR}$ into a code for $\rho^{A'R}$.

%

The sources $\rho^{A'R} $ and $\omega_{\rho}^{CQR}$ are equivalent since the following CPTP maps relate these two state as
\begin{align} 
    \omega^{CQR}=(U_{\KI} \otimes \id_R) \rho^{A'R} \text{ and } 
    \rho^{A'R}=(\cR_{\KI} \otimes \id_R) \omega^{CR}, \nonumber
\end{align}
where $\cR_{\KI}:CNQ \to A'$ is the reverse-KI operation $\cR_{\KI}:CNQ \to A'$.

In the rest of the paper we consider the transmission of the source $(\omega_{\rho}^{CQR})^{\ox m}$ defined as
\begin{align}
    \omega_{\rho}^{C^mQ^nR^m}:=(\omega_{\rho}^{CQR})^{\ox m}=\sum_{c^m} p_{c^m} \proj{c^m}^{C^m}  \ox  \omega_{c^m}^{Q^m R^m }.
\end{align}
The encoder,  the channel and the decoder and their corresponding  Stinespring dilations are defined as follows
\begin{align}
& \cE_n: C^m Q^m \to A^n         &    & V_{\cE_n}: C^m Q^m \to A^n E_{\cE} \\
&\cN^{\ox n}: A^n \to B^n        & &   V_{\cN^{\ox n}}: A^n \to B^n E_{\cN} \\
&\cD_n: B^n \to \hat{C}^m \hat{Q}^m & & V_{\cD_n}: B^n \to \hat{C}^m \hat{Q}^m E_{\cD}.
\end{align}
Moreover, we consider the following  extension of the  source, which is useful for analyzing the transmission task. For a given index $c$, system ${R'}$ purifies the state on systems $Q$ and $R$, and $C'$ is a copy of the classical system $C$
\begin{align}
    \omega_{\rho}^{CQR{R'}{C'}}:=\sum_{c} p_{c} \proj{c }^{C }   \otimes \proj{\omega_{c }}^{QR{R'}} \otimes \proj{c}^{{C'}},
\end{align}
and $m$ copies of the above extended state is 
\begin{align}
    \omega_{\rho}^{C^mQ^mR^m{R'}^m{C'}^m}:=(\omega_{\rho}^{CQRR'C'})^{\ox m}=\sum_{c^m} p_{c^m} \proj{c^m}^{C^m}   \otimes \proj{\omega_{c^m}}^{Q^m R^m {R'}^m} \otimes \proj{c^m}^{{C'}^m}.
\end{align}
We define the output state of the encoding isometry, the channel isometry and the decoding isometry respectively as follows
\begin{align}
    \nu^{A^n E_{\cE}R^m{R'}^m{C'}^m}&:=(V_{\cE_n} \ox \id_{ R^m{R'}^m{C'}^m}) \omega_{\rho}^{C^mQ^mR^m{R'}^m{C'}^m} \nonumber\\
    &=\sum_{c^m} p_{c^m}   \proj{\nu_{c^m}}^{A^n E_{\cE} R^m{R'}^m} \ox \proj{c^m}^{{C'}^m},   \\
    \gamma^{B^n E_{\cE} E_{\cN}  R^m{R'}^m{C'}^m}&:= (V_{\cN^{\ox n}} \ox \id_{E_{\cE}R^m{R'}^m{C'}^m})\nu^{A^n E_{\cE}R^m{R'}^m{C'}^m} \nonumber\\
    &=\sum_{c^m} p_{c^m}  \proj{\gamma_{c^m}}^{B^n E_{\cE} E_{\cN}  R^m{R'}^m } \ox \proj{c^m}^{{C'}^m},   \\
    \xi^{\hat{C}^m  \hat{Q}^m E_{\cE} E_{\cN} E_{\cD}  R^m{R'}^m{C'}^m}&:=(V_{\cD_n} \ox \id_{E_{\cE} E_{\cN}  R^m{R'}^m{C'}^m})\sigma^{B^n E_{\cE} E_{\cN} R^m{R'}^m{C'}^m} \nonumber\\
&=\sum_{c^m} p_{c^m}   \proj{\xi_{c^m}}^{ \hat{C}^m  \hat{Q}^m E_{\cE} E_{\cN} E_{\cD}  R^m{R'}^m } \ox \proj{c^m}^{{C'}^m}.
\end{align}

\section{Main Results}\label{sec: Main Results}
%

We extend the coherent information to a new quantity and call it the generalized information. This quantity characterizes the generalized capacity of a channel as we show in Theorem~\ref{thm: main}.
\begin{definition}[Generalized information]\label{def: I_G}
We define the generalized information of a channel $\cN:A \to B$ for a state $\rho^{ARX}$ as
\begin{align}
    I_G(\rho^{ARX},\cN):=I(B:X)_{\sigma}+I(R\> \rangle BX)_{\sigma}, \nonumber
\end{align}
where $\rho^{ARX}=\sum_x p(x)\proj{\rho_x}^{AR}\ox \proj{x}^X$ is a cq state with the quantum systems $A$ and $R$, and the classical system $X$. The quantum mutual information and the coherent information are in terms of the state
$\sigma^{BRX}=\sum_x p(x)(\cN\ox\id_R)(\proj{\rho_x}^{AR})\ox \proj{x}^X$.
\end{definition}

\begin{remark}\label{remark T}
The generalized information reduces to the  mutual information or the coherent information if $R$ or $X$ is a trivial system, respectively. 
Namely, if system $R$ of the state $\rho^{ARX}$ is trivial, then 
\begin{align}
    I_G(\rho^{AX},\cN):=I(B:X)_{\sigma}, \nonumber
\end{align}
where $\sigma^{BX}=\sum_x p(x)\cN(\proj{\rho_x}^{A})\ox \proj{x}^X$.
Moreover, if system $X$ of the state $\rho^{ARX}$ is trivial, then 
\begin{align}
    I_G(\rho^{AR},\cN):=I(R\> \rangle B)_{\sigma}, \nonumber
\end{align}
where $\sigma^{BR}=(\cN\ox\id_R)\proj{\rho}^{AR}$.
\end{remark}
\medskip
Similar to the mutual information and the coherent information, the generalized information obeys data-processing inequality. Namely, for quantum channels $\cN_1: A \to B$ and $\cN_2: B \to C$
the inequality below holds
\begin{align}\label{eq: I_G data processing}
    I_G(\rho^{ARX},\cN_2 \circ \cN_1) \leq I_G(\rho^{ARX},\cN_1).  
\end{align}
This can be easily proven as below
\begin{align}
    I_G(\rho^{ARX},\cN_2 \circ \cN_1)&=I(C:X)_{\sigma'}+I(R\> \rangle CX)_{\sigma'} \nonumber\\
    &=I(C:X)_{\sigma'}+I(R:CX)_{\sigma'}-S(R)_{\sigma'} \nonumber\\
    &\leq I(B:X)_{\sigma}+I(R:BX)_{\sigma}-S(R)_{\sigma} \nonumber\\
    &=I_G(\rho^{ARX},\cN_1), \nonumber
\end{align}
where in the second and third lines the states are defined as ${\sigma'}^{CRX}=(\cN_2 \circ \cN_1 \ox\id_R)\rho^{ARX}$ and $\sigma^{BRX}=(\cN_1\ox\id_R)\rho^{ARX}$. In the third line, we apply data-processing inequality to the quantum mutual information.  

\medskip

\begin{figure}[t] 
\includegraphics[width=0.5\textwidth]{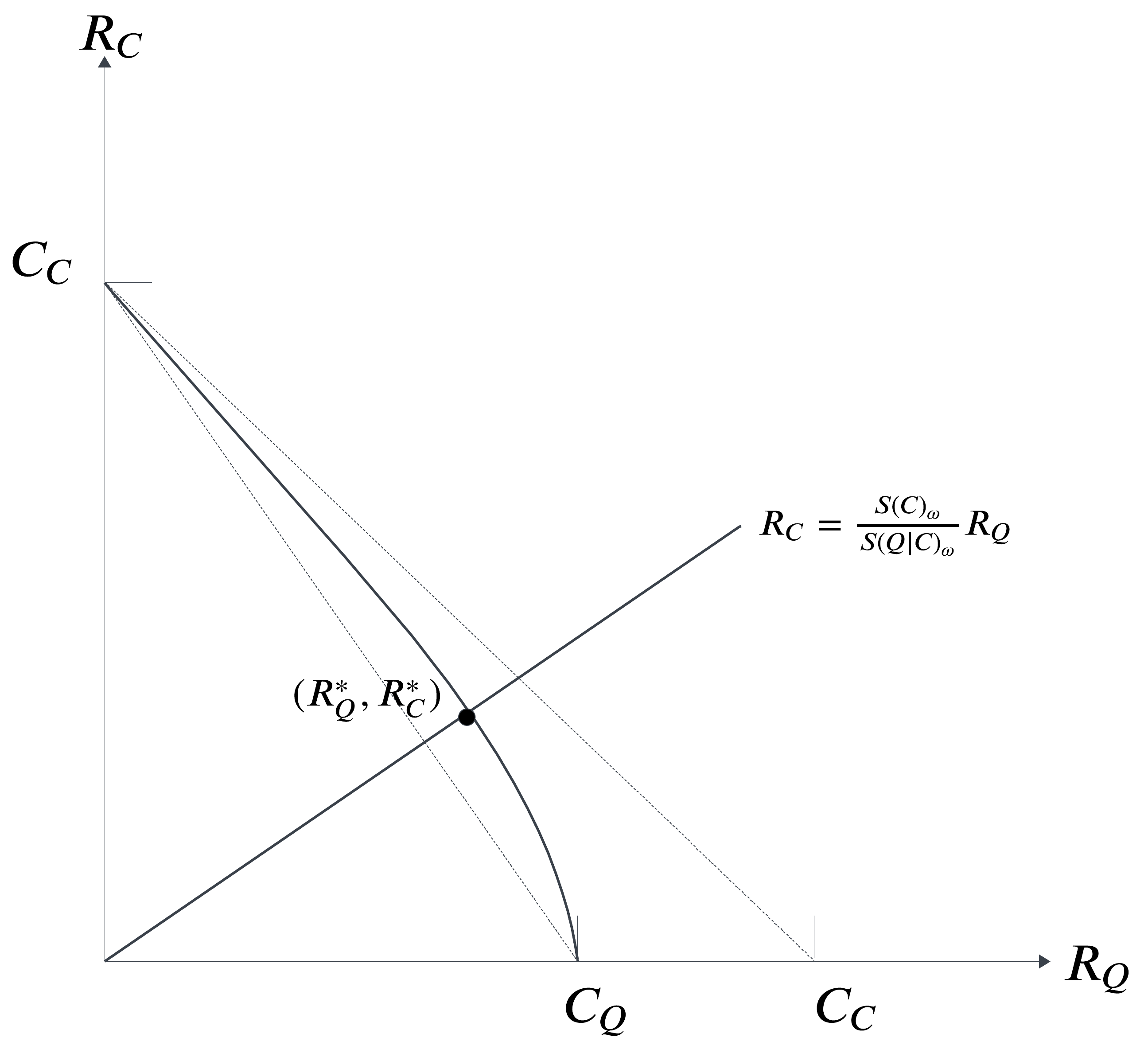}
\captionsetup{width=0.95\linewidth}%
\caption{ To find the generalized capacity of a channel $\cN$ for transmitting a state $\rho^{A'R}$, with KI decomposition $\omega_{\rho}^{CNQR}$, draw the line $R_C=\frac{S(C)_{\omega}}{S(Q|C)_{\omega}} R_Q$ and find the intersection of this line with the Devetak-Shor curve. Let $(R_Q^*,R_C^*)$ denote the intersected point. 
  Then, $C_G(\rho^{A'R},\cN)=R_Q^*+R_C^*$. This implies that the maximal number of copies of the state $\rho^{A'R}$, which can be transmitted per channel in the asymptotic limit, is  $\frac{R_Q^*+R_C^*}{S(CQ)_{\omega}}$. For a fully classical $\rho^{A'R}$ with $S(Q|C)_{\omega}=0$, the slop of the line is $\infty$, therefore, the generalized capacity is equal to the classical capacity as $C_G(\rho^{A'R},\cN)=C_C$.
  For a pure source $\ket{\rho}^{A'R}$  with $S(C)_{\omega}=0$,  the slop of the line is $0$, hence, the generalized capacity is equal to the quantum capacity as $C_G(\rho^{A'R},\cN)=C_Q$. The fact that the Devetak-Shor curve is below the line 
  $R_C=C_C-R_Q$ implies that the generalized capacity, which equals to the sum $R_Q^*+R_C^*$, is maximized for a fully classical state $\rho^{A'R}$}
\label{fig: Thm_diagram}
\end{figure}

\begin{theorem}\label{thm: main}
The generalized capacity of a quantum channel is characterised as follows 
\begin{align*}
    C_G(\rho^{A'R},\cN)=\lim_{l \to \infty} \frac{1}{l} \max_{\substack{\varphi^{A^lRX}: \\\frac{I(B^l: X)_{\sigma}}{I(R \>\rangle B^l X)_{\sigma}}=\frac{S(C)_{\omega}}{S(Q|C)_{\omega}} }}  I_G(\varphi^{A^lRX},\cN^{\ox l}),
\end{align*}
where the information quantities are with respect to the following states 
\begin{align}
  \varphi^{A^lRX}&=\sum_x \proj{\varphi_x}^{A^lR}\ox \proj{x}^X \nonumber   \\
  \sigma^{B^lRX}&=(\cN^{\ox l}\ox \id_{RX})\varphi^{A^lRX}. \nonumber
\end{align}
Geometrically, the generalized capacity is characterized  by the intersection point of the line $R_C=\frac{S(C)_{\omega}}{S(Q|C)_{\omega}} R_Q$ and the Devetak-Shore curve for the simultaneous transmission of classical and quantum information obtained  in \cite{Devetak-Shor-2005}.
Namely, let $(R_Q^*,R_C^*)$ be the intersected point, then 
\begin{align*}
C_G(\rho^{A'R},\cN)=R_Q^*+R_C^*.
\end{align*}
This is illustrated in Fig.~\ref{fig: Thm_diagram}. 
The dimension  of the system $X$ is bounded as $|X| \leq |\cH_{A}|^2+2$.
\end{theorem}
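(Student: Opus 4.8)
The plan is to prove the two inequalities $C_G(\rho^{A'R},\cN) \geq R_Q^* + R_C^*$ and $C_G(\rho^{A'R},\cN) \leq R_Q^* + R_C^*$ separately, using the reduction to the KI-decomposed source $\omega_\rho^{CQR}$ established in the preceding section, and linking the problem to the Devetak--Shor trade-off of Theorem~\ref{Thm: Devetak-Shor}. For the achievability direction, I would start from the observation that transmitting $m$ copies of $\omega_\rho^{CQR}$ faithfully amounts to transmitting the classical register $C^m$ (a message from $\approx 2^{mS(C)_\omega}$ typical sequences, up to standard typicality truncation) together with the quantum system $Q^m$ correlated with $R^m$ — but crucially, for a \emph{generic} copy the state $\omega_c^{QR}$ is entangled, so this is not literally the Devetak--Shor task with a maximally entangled state. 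The standard fix is to purify: use a blocking/typicality argument so that $Q^m$ conditioned on a typical $c^m$ looks like (a state close to) a maximally mixed state of entropy $\approx mS(Q|C)_\omega$ on a typical subspace, and feed the purifying reference into $R^m$; then a Devetak--Shor code transmitting $R_Q = S(Q|C)_\omega \cdot (m/n)$ qubits and $R_C = S(C)_\omega \cdot (m/n)$ cbits per channel use does the job, provided the rate pair $(R_Q, R_C)$ lies in the Devetak--Shor region. Maximizing $m/n$ subject to $(R_Q,R_C)$ being on the boundary curve $f_{\DS}$ forces the ratio $R_C/R_Q = S(C)_\omega / S(Q|C)_\omega$, picking out the intersection point $(R_Q^*, R_C^*)$, and then $C_G = mS(CQ)_\omega/n = (m/n)(S(C)_\omega + S(Q|C)_\omega) = R_C^* + R_Q^*$. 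The single-letter-to-regularized passage, and the fact that the Devetak--Shor region is itself only known in regularized form, is exactly why the theorem statement is phrased with the $\lim_{l\to\infty}\frac1l\max$; so I would carry the whole argument at the level of $l$-blocked channels and take the limit at the end, using continuity of entropies (Fannes--Audenaert) and the gentle-measurement/decoupling lemmas to control the fidelity.

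For the converse, I would take an arbitrary $(n,\epsilon)$ code $(\cE_n,\cD_n)$ achieving $m$ copies, i.e.\ $F\big((\omega_\rho^{CQR})^{\otimes m}, (\cD_n\circ\cN^{\otimes n}\circ\cE_n)\otimes\id_{R^m}(\omega_\rho^{CQR})^{\otimes m}\big)\geq 1-\epsilon$, and extract from it a Devetak--Shor code of comparable rates. Concretely, using the extended source $\omega_\rho^{CQRR'C'}$ introduced in the excerpt, the output state $\xi$ of the decoder is $\epsilon'$-close (in trace distance, via Fuchs--van de Graaf) to the ideal $(\omega_\rho^{CQRR'C'})^{\otimes m}$; I would then split $\log$-dimensions: the recovered classical part $\hat C^m$ carries $\approx mS(C)_\omega$ bits reliably (apply the Holevo/Fano bound to $I(\hat C^m : C'^m)$) and the recovered quantum part $\hat Q^m$ carries, by the decoupling criterion / the quantum Fano inequality applied to $I(R^m R'^m \rangle \hat Q^m)$, coherent information $\approx mS(Q|C)_\omega$. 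Feeding these two numbers into the converse half of Theorem~\ref{Thm: Devetak-Shor} (with the classical side-information register playing the role of $X$, dimension-bounded by $|\cH_A|^2+2$) shows that the pair $\big(\tfrac{m}{n}S(Q|C)_\omega,\tfrac{m}{n}S(C)_\omega\big)$, up to $o(1)$ corrections, must lie in the Devetak--Shor region; since its ratio is fixed at $S(C)_\omega/S(Q|C)_\omega$, the best such $m/n$ is the one hitting the boundary at $(R_Q^*,R_C^*)$, giving $mS(CQ)_\omega/n \leq R_Q^* + R_C^* + o(1)$. Taking $n\to\infty$ and then $\epsilon\to 0$ closes the converse. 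The identification of $I_G(\varphi^{A^lRX},\cN^{\otimes l})$ with $R_C + R_Q$ along the Devetak--Shor boundary is then just the algebraic identity $I(B^l:X) + I(R\rangle B^l X) = I(B^l:X) + I(R:B^l X) - S(R)$ combined with the fact that on the boundary the two coordinates are simultaneously optimized by a common $\varphi$, exactly the structure already used in the data-processing computation above.

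The main obstacle I anticipate is the achievability step where a generic mixed $\omega_c^{QR}$ must be massaged into something on which a Devetak--Shor entanglement-transmission code can be run: Devetak--Shor assumes a \emph{maximally} entangled $\Phi_\kappa^{A'R}$, whereas here each block $Q^m | c^m$ is only \emph{close} to maximally mixed on a typical subspace, and the purifying system is $R'^m$ rather than the reference $R^m$ that actually needs to be preserved. Bridging this requires a careful layered argument — conditional typicality on $c^m$, projecting onto the $\delta$-typical subspace of each $\omega_c^Q$, using the fact that the reduced state on $Q^m$ after such projection is within $O(\sqrt\epsilon)$ in trace norm of a flat state, and then invoking the robustness of the Devetak--Shor protocol under small perturbations of the input state (standard, but needs to be stated). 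A secondary subtlety is making sure the classical register $C^m$, which the encoder sees but which is genuinely correlated with $R^m$, is handled as a classical message \emph{and} its correlations with $R^m$ restored at the output; this is where the "$C'$ is a copy of $C$" bookkeeping in the extended source pays off, since restoring $C'^m \leftrightarrow \hat C^m$ correlations is exactly the classical-message-recovery condition $\Pr(\hat M \neq M) \leq \epsilon$ of the Devetak--Shor code. Everything else — the convexity/monotonicity of $f_{\DS}$ guaranteeing a unique intersection with the ray of slope $S(C)_\omega/S(Q|C)_\omega$, the $|X| \leq |\cH_A|^2+2$ cardinality bound inherited verbatim from Theorem~\ref{Thm: Devetak-Shor}, and the two degenerate cases $S(Q|C)_\omega = 0$ (slope $\infty$, $C_G = C_C$) and $S(C)_\omega = 0$ (slope $0$, $C_G = C_Q$) — follows by direct inspection of Fig.~\ref{fig: Thm_diagram} and the reductions in Remark~\ref{remark: special cases}.
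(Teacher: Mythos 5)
Your overall strategy matches the paper's: reduce to the KI form $\omega_\rho^{CQR}$, bound the classical and quantum parts separately, and locate the answer as the intersection of the ray of slope $S(C)_\omega/S(Q|C)_\omega$ with the Devetak--Shor curve. The converse sketch is structurally the right shape, though you should be aware that the step you compress into ``quantum Fano / Holevo bound'' is where the paper does real work: it introduces the functions $Y_\epsilon(\omega)$ and $W_\epsilon(\omega)$ (Definition~\ref{def:Y_epsilon}) to control the residual conditional entropies $S(\hat{Q}^m R^m R'^m|\hat{C}^m)_\xi$ and $S(\hat{C}^m|C'^m)_\xi$ of the \emph{approximately} decoded state, and the fact that these vanish as $\epsilon\to 0$ is not a generic continuity statement --- it rests on the Koashi--Imoto rigidity theorem (any map preserving $\omega^{CQR}$ acts trivially on $C$ and $Q$) together with a concavity argument for continuity at $\epsilon=0$. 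Your proposal would need to supply an equivalent of this.

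The genuine gap is in the achievability. You propose to project $Q^m$ conditioned on a typical $c^m$ onto the conditionally typical subspace and then claim the resulting state is ``within $O(\sqrt{\epsilon})$ in trace norm of a flat state,'' so that a Devetak--Shor code for a maximally entangled input can be run after a robustness argument. That claim is false: after typical projection the eigenvalues lie in $[2^{-m(S(Q|C)+c\delta)},\,2^{-m(S(Q|C)-c\delta)}]$, so their ratio is of order $2^{2mc\delta}$ and the projected state is \emph{not} trace-norm close to the maximally mixed state on the typical subspace for any fixed $\delta>0$; the perturbation-of-$\Phi_\kappa$ route therefore does not close. (One could rescue a flattening argument by a further decomposition into type classes, but that is an additional construction you did not give.) The paper avoids the issue entirely by proving Lemma~\ref{lemma: DS extension}: the Devetak--Shor coding construction is shown to work verbatim for an \emph{arbitrary} pure state $\ket{\phi_m}^{A'R}$ of Schmidt rank $\kappa$, not just for $\Phi_\kappa$, because the ``type'' of the underlying quantum code in \cite{Devetak-capacity-2005} is unchanged; the quantum rate is then set by the Schmidt-rank bound $2^{m[S(Q|C)_\omega+c\delta]}$ coming from the typical projection, with the purification $R'^m$ absorbed into the reference. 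You correctly identified this as the main obstacle, but the fix you propose does not work, and the correct fix is a different lemma rather than a perturbation bound.
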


\noindent Above, the cardinality bound on system $X$ directly follows from Theorem~\ref{Thm: Devetak-Shor} since the generalized information is the sum of the rate terms in Theorem~\ref{Thm: Devetak-Shor} for which the cardinality bound on system $X$ holds.

\medskip
Remark~\ref{remark T} and Theorem~\ref{thm: main} imply the classical capacity and 
the entanglement transmission capacity theorems as follows.
\begin{corollary}
The classical capacity $C_C(\cN)$ and the entanglement transmission capacity (the quantum capacity) 
 $C_Q(\cN)$ are special cases of the generalized capacity for the transmission of a fully
 classical state and a pure entangled state, respectively as follows
\begin{align*}
    C_C(\cN)&=C_G(\rho^{A'R},\cN)\\
    C_Q(\cN)&=C_G(\ket{\psi}^{A'R},\cN),
\end{align*} 
where systems $A'$ and $R$ of the state $\rho^{A'R}=\sum_{x}p(x)\proj{x}^{A'}\ox \proj{x}^R$ are classical.
\end{corollary}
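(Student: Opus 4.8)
The plan is to prove the two claimed equalities ($C_C = C_G$ for a fully classical source, and $C_Q = C_G$ for a pure entangled source) by combining Theorem~\ref{thm: main}, Remark~\ref{remark T}, and the geometric picture of Fig.~\ref{fig: Thm_diagram}. Since this corollary is an immediate consequence of results already established, the proof is mostly a matter of correctly specializing the Koashi-Imoto decomposition in the two extreme cases and reading off the intersection point of the line $R_C = \frac{S(C)_\omega}{S(Q|C)_\omega} R_Q$ with the Devetak-Shor curve $f_{\DS}$.

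First I would treat the classical case. For $\rho^{A'R} = \sum_x p(x)\proj{x}^{A'}\otimes\proj{x}^R$, I would compute its KI decomposition explicitly: system $A'$ decomposes with $C$ carrying the full classical correlation, $Q$ trivial (so $S(Q|C)_\omega = 0$), and $N$ trivial. Hence $S(CQ)_\omega = S(C)_\omega = H(p)$, and the line $R_C = \frac{S(C)_\omega}{S(Q|C)_\omega}R_Q$ has infinite slope, i.e. it is the vertical line $R_Q = 0$. Its intersection with the Devetak-Shor curve is the point $(R_Q^*, R_C^*) = (0, C_C)$, because $f_{\DS}(0) = C_C$ by definition of the classical capacity as the $R_Q=0$ endpoint of the trade-off curve. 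Then Theorem~\ref{thm: main} gives $C_G(\rho^{A'R},\cN) = R_Q^* + R_C^* = C_C(\cN)$. Equivalently, at the level of the variational formula, the constraint $\frac{I(B^l:X)_\sigma}{I(R\>\rangle B^l X)_\sigma} = \infty$ forces $I(R\>\rangle B^l X)_\sigma = 0$, so $R$ is effectively decoupled and $I_G$ reduces to $I(B^l:X)_\sigma$ by Remark~\ref{remark T}; maximizing this regularized Holevo quantity is exactly the Holevo-Schumacher-Westmoreland formula for $C_C(\cN)$.

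Next I would treat the pure case. For $\ket{\psi}^{A'R}$ with Schmidt decomposition $\sum_l \sqrt{\lambda_l}\ket{e_l}^{A'}\ket{e_l}^R$, the KI decomposition has $C$ trivial (no classical correlations, so $S(C)_\omega = 0$), $N$ trivial, and $Q = A'$ carrying the entanglement, so $S(CQ)_\omega = S(Q)_\omega = H(\lambda)$. The line then has slope $0$, i.e. it is the horizontal line $R_C = 0$, whose intersection with the Devetak-Shor curve is $(R_Q^*, R_C^*) = (C_Q, 0)$, since $f_{\DS}(C_Q) = 0$. Theorem~\ref{thm: main} yields $C_G(\ket{\psi}^{A'R},\cN) = R_Q^* + R_C^* = C_Q(\cN)$. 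At the variational level, the constraint forces $I(B^l:X)_\sigma = 0$, so $X$ may be taken trivial, and $I_G$ reduces to the coherent information $I(R\>\rangle B^l)_\sigma$ by Remark~\ref{remark T}; its regularized maximum is the LSD formula for the quantum capacity $C_Q(\cN)$. Finally, I should remark that in the pure case the reduction to the entanglement-transmission capacity also follows from Remark~\ref{remark: special cases}, where transmitting $A'$ of $\ket{\psi}^{A'R\otimes m}$ is shown to be equivalent (up to local relabeling) to entanglement transmission, so the fidelity criterion in Definition~\ref{def: capacity} matches the one defining $C_Q$.

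The only mildly delicate point — and the one I would be most careful about — is justifying that the line of slope $0$ or $\infty$ meets the Devetak-Shor curve at precisely the axis endpoint, i.e. that $f_{\DS}$ is a genuine trade-off curve connecting $(C_Q,0)$ to $(0,C_C)$ with the endpoints attained; this is exactly the content of Theorem~\ref{Thm: Devetak-Shor} together with the sandwiching between the lines $R_C = C_C - R_Q$ and $R_C = C_C - \frac{C_C}{C_Q}R_Q$, both of which pass through those two endpoints. I would also note the degenerate sub-case where $C_Q = 0$: then the pure-state statement reads $C_G = 0$, consistent with the line $R_C = 0$ meeting the curve at the origin, and no separate argument is needed.
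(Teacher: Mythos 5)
Your proposal is correct and follows essentially the same route as the paper, which derives the corollary from Theorem~\ref{thm: main} together with Remark~\ref{remark T} and the geometric picture of Fig.~\ref{fig: Thm_diagram}: the fully classical source has $S(Q|C)_\omega=0$ (vertical line, intersection $(0,C_C)$) and the pure source has $S(C)_\omega=0$ (horizontal line, intersection $(C_Q,0)$), so $C_G=R_Q^*+R_C^*$ gives $C_C$ and $C_Q$ respectively. Your added checks — that the Devetak--Shor curve actually attains the axis endpoints, the appeal to Remark~\ref{remark: special cases} for the operational match of the fidelity criteria, and the degenerate case $C_Q=0$ — are sensible elaborations of what the paper leaves implicit.
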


\section{Converse Proof}\label{sec: converse}
In this section, we prove the converse bound (an upper bound) for the rate of Theorem~\ref{thm: main}. We first define two functions in the following definition which appear
in the converse bounds. Then, in Lemma~\ref{lemma:Y_epsilon properties}, we find 
some properties of these functions, which we use to analyze the converse bounds. We leave the proof of this lemma to Sec.~\ref{sec: Proof of Lemma Y W epsilon} in the appendix.

\begin{definition}
  \label{def:Y_epsilon}
  For the KI decomposition (without the redundant system $N$) 
  $\omega^{C Q R}=\sum_{c} p_c \proj{c}^{C} \otimes \omega_{c}^{Q R}$
  of the state $\rho^{A'R}$ and $\epsilon \geq 0$, define
  \begin{align*}
    Y_\epsilon(\omega) &:=  
        \max S(\hat{Q} RR'|\hat{C})_\tau 
                  &&\text{ s.t. } U:C Q \rightarrow \hat{C} \hat{Q} E
                  \text{ is an isometry with } 
                  F( \omega^{C Q R},\tau^{\hat{C} \hat{Q}R})  \geq 1- \epsilon,\\
    W_\epsilon(\omega) &:=  
        \max S(\hat{C} |C')_\tau 
                  &&\text{ s.t. } U:C Q \rightarrow \hat{C} \hat{Q} E
                  \text{ is an isometry with } 
                  F( \omega^{C Q R},\tau^{\hat{C} \hat{Q}R})  \geq 1- \epsilon,              
\end{align*}
where the conditional entropy is with respect to the state $\tau$ defined as
\begin{align*}
  \omega^{CQR{R'}{C'}}&:=\sum_{c} p_{c} \proj{c }^{C }   \otimes \proj{\omega_{c }}^{QR{R'}} \otimes \proj{c}^{{C'}},    \\
  \tau^{\hat{C}  \hat{Q} ERR' C'}
     &:= (U \otimes \1_{RR'C'}) \omega^{CQ R R'C'}    (U^{\dagger} \otimes \1_{RR'C'}), \\
     &=\sum_{c} p_{c} \proj{\tau_{c}}^{\hat{C}\hat{Q}ER{R'}} \otimes \proj{c}^{{C'}},
\end{align*}
where $\omega^{CQR{R'}{C'}}$ is an extension  of the state $\omega^{CQR}$.
\end{definition}
In this definition, the dimension of the environment is w.l.o.g. bounded as $|E| \leq (|C||Q|)^2$ because the input and output dimensions of the channel are fixed as $|C||Q|$; hence, the optimisation is of a continuous function over a compact domain, so we have a maximum rather than a supremum.
\begin{lemma}
  \label{lemma:Y_epsilon properties}
  The functions $Y_\epsilon(\omega)$ and $W_\epsilon(\omega)$ have the following properties:
  \begin{enumerate}
    \item They are  non-decreasing functions of $\epsilon$. 
    \item They are concave in $\epsilon$.
    \item They are continuous for $\epsilon \geq 0$. 
    \item They are sub-additive, i.e. for any two states $\omega_1^{C_1 Q_1 R_1}$ and $\omega_2^{C_2 Q_2 R_2}$ and for $\epsilon \geq 0$,
    \begin{align*}
        &Y_{\epsilon}(\omega_1 \otimes \omega_2) \leq  Y_{\epsilon}(\omega_1) +Y_{\epsilon}(\omega_2),\\
        &W_{\epsilon}(\omega_1 \otimes \omega_2) \leq  W_{\epsilon}(\omega_1) +W_{\epsilon}(\omega_2).
    \end{align*}
    \item At $\epsilon=0$, $Y_0(\omega) =0$ and $W_0(\omega) =0$ hold.
  \end{enumerate}
\end{lemma}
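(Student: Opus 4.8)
\textbf{Proof plan for Lemma~\ref{lemma:Y_epsilon properties}.}
The plan is to treat $Y_\epsilon$ and $W_\epsilon$ uniformly, since both are maxima of an entropic functional over the same feasible set $\{U \text{ isometry}: F(\omega^{CQR},\tau^{\hat C\hat QR})\ge 1-\epsilon\}$, and only the objective differs. First I would handle monotonicity (property 1): this is immediate because enlarging $\epsilon$ enlarges the feasible set, so the maximum cannot decrease. For continuity and compactness I would note, as the remark after the definition already records, that the environment dimension can be capped at $|E|\le(|C||Q|)^2$, so the set of isometries is compact, the fidelity constraint is closed, and the conditional entropy is continuous; hence for each $\epsilon$ the supremum is attained, and $Y_\epsilon,W_\epsilon$ are well-defined. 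Continuity in $\epsilon$ (property 3) then follows once concavity (property 2) is established, since a concave function on $[0,\infty)$ is automatically continuous on the open interval, and continuity at $\epsilon=0$ needs a separate small argument (see below).

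For concavity (property 2), I would argue as follows. Fix $\epsilon_0,\epsilon_1\ge 0$ and $\lambda\in[0,1]$, set $\epsilon_\lambda=\lambda\epsilon_0+(1-\lambda)\epsilon_1$, and let $U_0,U_1$ be optimal isometries for $\epsilon_0,\epsilon_1$. The natural trick is to introduce a classical flag: form a new isometry into an enlarged output space that coherently/classically branches, applying $U_0$ on a flag register in state $\ket 0$ with ``weight'' $\lambda$ and $U_1$ with weight $1-\lambda$. Concretely I would let the environment carry a classical register $G$, define $U=\sqrt\lambda\,U_0\otimes\ket 0^G + \sqrt{1-\lambda}\,U_1\otimes\ket 1^G$ acting on a flagged input, and check that the resulting output $\tau$ on $\hat C\hat Q R$ (after tracing the environment including $G$) is the convex combination $\lambda\tau_0+(1-\lambda)\tau_1$, hence satisfies the fidelity bound $F(\omega,\tau)\ge 1-\epsilon_\lambda$ by concavity/joint-concavity of fidelity. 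Then, because the flag $G$ is classical and can be regarded as part of $\hat C$ (or conditioned upon), the conditional entropy $S(\hat Q RR'|\hat C)$ evaluated on the flagged state equals $\lambda S(\hat Q RR'|\hat C)_{\tau_0}+(1-\lambda)S(\hat Q RR'|\hat C)_{\tau_1}$ up to the classical branching term, giving the desired lower bound $Y_{\epsilon_\lambda}(\omega)\ge\lambda Y_{\epsilon_0}(\omega)+(1-\lambda)Y_{\epsilon_1}(\omega)$; the identical computation works for $W_\epsilon$. I expect this flag construction — making sure the fidelity constraint is correctly inherited and that the branching register can legitimately be absorbed into the conditioning system $\hat C$ (and its copy $C'$) without changing the relevant entropies — to be the main obstacle, since it requires care about which registers the constraint and the objective ``see.''

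For sub-additivity (property 4), given optimal isometries $U_1,U_2$ for $\omega_1,\omega_2$ at error $\epsilon$, I would take the tensor isometry $U_1\otimes U_2$ on $\omega_1\otimes\omega_2$; multiplicativity of fidelity under tensor products gives $F(\omega_1\otimes\omega_2,\tau_1\otimes\tau_2)=F(\omega_1,\tau_1)F(\omega_2,\tau_2)\ge(1-\epsilon)^2\ge 1-\epsilon$ — wait, that last step is wrong in direction, so instead I would observe $F(\omega_1\otimes\omega_2,\tau_1\otimes\tau_2)\ge (1-\epsilon)^2$, which is $\ge 1-2\epsilon$ but not obviously $\ge 1-\epsilon$; the clean fix is that the definition only needs \emph{some} feasible $U$ for $\omega_1\otimes\omega_2$ at error $\epsilon$, so one instead shows the inequality by a different route, namely that $U_1\otimes U_2$ is feasible at error $2\epsilon-\epsilon^2\le$ the right thing — rather, the correct standard argument is: the \emph{optimal} $U$ for $\omega_1\otimes\omega_2$ at error $\epsilon$ need not be a product, but one upper-bounds $S(\hat Q_1\hat Q_2 R_1 R_2 R_1'R_2'|\hat C_1\hat C_2)_\tau$ via subadditivity of conditional entropy $S(XY|ZW)\le S(X|Z)+S(Y|W)$ together with monotonicity of fidelity under the partial trace discarding system $2$ (resp. system $1$), which shows each marginal of $\tau$ is feasible for the corresponding single-copy problem at error $\epsilon$. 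This latter monotonicity route avoids the fidelity-multiplication pitfall entirely and is the argument I would write. Finally, property 5 ($\epsilon=0$): the constraint $F(\omega^{CQR},\tau^{\hat C\hat QR})=1$ forces $\tau^{\hat C\hat QR}=\omega^{CQR}$ (identifying $\hat C\hat Q$ with $CQ$), i.e. $U$ leaves the state invariant, so by Theorem~\ref{thm: KI decomposition}(iii) $U$ has the block-diagonal form $U=\sum_c\proj c^C\otimes U_c^N\otimes\1_c^Q$ with each $U_c$ preserving $\omega_c$; tracing out $E$ then shows $\hat C=C$, $\hat Q=Q$ with $R,R'$ carried along purely, so $S(\hat Q RR'|\hat C)_\tau=S(QRR'|C)_\omega$ and $S(\hat C|C')_\tau=S(C|C')_\omega=0$ — here I should double-check that the extension registers $R',C'$ are set up so that these conditional entropies indeed vanish, which is exactly how Definition~\ref{def:Y_epsilon} is rigged, and continuity at $\epsilon=0$ then follows from concavity together with monotonicity pinning the value between $Y_0=0$ and the $\epsilon>0$ values. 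I would relegate the detailed verifications to Sec.~\ref{sec: Proof of Lemma Y W epsilon} as the excerpt already promises.
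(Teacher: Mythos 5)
Your overall strategy coincides with the paper's on points 1, 2, 4 and 5: monotonicity from nesting of feasible sets; concavity via a flagged superposition $\sqrt{\lambda}\,U_1\otimes\ket{11}^{FF'}+\sqrt{1-\lambda}\,U_2\otimes\ket{22}^{FF'}$ followed by joint concavity of the fidelity and the SSA step $S(\hat Q RR'|\hat C)\ge S(\hat Q RR'|\hat C F)$ (the paper conditions on the flag rather than absorbing it into $\hat C$, which is the variant of your parenthetical that actually gives the inequality in the right direction); sub-additivity by restricting the optimal joint isometry to each factor (your self-corrected route, which is exactly the paper's: build $U_1$ by feeding in $\ket{\omega_2}$ and pushing everything else into the environment, use monotonicity of fidelity under partial trace for feasibility, and SSA/chain rule for the entropy inequality); and $\epsilon=0$ via the Koashi--Imoto rigidity theorem plus the purifying registers $R'$ and $C'$ making the conditional entropies vanish.

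The one genuine flaw is your argument for continuity at $\epsilon=0$. You claim it follows from ``concavity together with monotonicity pinning the value between $Y_0=0$ and the $\epsilon>0$ values.'' Concavity on $[0,\infty)$ gives continuity on the open interval and \emph{lower} semicontinuity at the left endpoint, and monotonicity adds nothing in the needed direction: the function $f(0)=0$, $f(\epsilon)=1$ for $\epsilon>0$ is concave and non-decreasing yet discontinuous at $0$. What is required is \emph{upper} semicontinuity at $0$, i.e.\ $\limsup_{\epsilon\to 0}Y_\epsilon(\omega)\le Y_0(\omega)=0$ --- and this is precisely the direction the converse proof relies on when it discards the $Y_\epsilon$ and $W_\epsilon$ terms in the limit $\epsilon\to 0$. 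The paper closes this by the compactness argument you already set up for attainment of the maximum (bounded environment, closed constraint, continuous objective), which yields upper semicontinuity at $\epsilon=0$ and hence continuity there, citing Rockafellar. You should route your continuity-at-zero claim through that compactness argument rather than through concavity and monotonicity.
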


%

We are now well equipped to bound the capacity term $mS(CQ)_{\omega}/n$. We first obtain the following chain of inequalities
\begin{align}\label{eq: converse 1}
    mS(Q|{C})_{\omega}&=S(Q^m|{C}^m)_{\omega} \nonumber\\
    &\leq  S(\hat{Q}^m |\hat{C}^m)_{\xi}+m\delta(m,\epsilon) \nonumber\\
    &=S(\hat{Q}^m R^m {R'}^m|\hat{C}^m)_{\xi}-S( R^m {R'}^m| \hat{C}^m  \hat{Q}^m)_{\xi}+m\delta(m,\epsilon) \nonumber \\
    &=S(\hat{Q}^m R^m {R'}^m|\hat{C}^m)_{\xi}+I( R^m{R'}^m: \hat{C}^m  \hat{Q}^m)_{\xi}-S( R^m {R'}^m)_{\xi}+m\delta(m,\epsilon) \nonumber \\
    &\leq S(\hat{Q}^m R^m {R'}^m|\hat{C}^m)_{\xi}+I( R^m{R'}^m: B^n)_{\gamma}-S( R^m {R'}^m)_{\xi}+m\delta(m,\epsilon) \nonumber \\
    &\leq S(\hat{Q}^m R^m {R'}^m|\hat{C}^m)_{\xi}+I( R^m{R'}^m: B^n {C'}^m)_{\gamma}-S( R^m {R'}^m)_{\xi}+m\delta(m,\epsilon) \nonumber \\
    &=S(\hat{Q}^m R^m {R'}^m|\hat{C}^m)_{\xi}-S( R^m{R'}^m| B^n {C'}^m)_{\gamma}+m\delta(m,\epsilon) \nonumber \\
    &\leq Y_{\epsilon}(\omega^{\ox m}) -S( R^m{R'}^m| B^n {C'}^m)_{\gamma}+m\delta(m,\epsilon) \nonumber \\
    &\leq mY_{\epsilon}(\omega) -S( R^m{R'}^m| B^n {C'}^m)_{\gamma}+m\delta(m,\epsilon) \nonumber \\
    &=mY_{\epsilon}(\omega)+\sum_{c^m} p_{c^m}I_c(\omega_{c^m}^{Q^m},\cN^{\ox n}\circ \cE_n )+m\delta(m,\epsilon),  
\end{align}
where the second line is due to the decodability of the information:
the 
output state on systems $\hat{C}^m \hat{Q}^m$ is $2\sqrt{2\epsilon}$-close 
to the original state $C^m  Q^m$ in trace norm; then the inequality follows 
by applying the Fannes-Audenaert inequality 
\cite{Fannes1973,Audenaert2007}, where 
$\delta(m,\epsilon)=\sqrt{2\epsilon} \log(|C||Q|) + \frac1m h(\sqrt{2\epsilon})$.
The fifth and sixth lines are due to data-processing inequality. 
The eighth line follows from Definition~\ref{def:Y_epsilon}.
The penultimate line
is due to Lemma~\ref{lemma:Y_epsilon properties} point 4.
The last line follows from the definition of the coherent information for the 
channel $\cN^{\ox n}\circ \cE_n$. Moreover, we obtain the following bound
on the classical part of the state $\omega$
\begin{align}\label{eq: converse 2}
    mS({C})_{\omega}&=S({C}^m)_{\omega} \nonumber\\
    & \leq S(\hat{C}^m)_{\omega} +m\delta(m,\epsilon)\nonumber\\
    & = S(\hat{C}^m|{C'}^m)_{\omega}+I(\hat{C}^m:{C'}^m)_{\omega}+m\delta(m,\epsilon) \nonumber\\
    &\leq S(\hat{C}^m|{C'}^m)_{\omega}+I(B^n:{C'}^m)_{\gamma}+m\delta(m,\epsilon)\nonumber\\
    &\leq W_{\epsilon}(\omega^{\ox m})+I(B^n:{C'}^m)_{\gamma}+m\delta(m,\epsilon)\nonumber\\
    &\leq mW_{\epsilon}(\omega)+I(B^n:{C'}^m)_{\gamma}+m\delta(m,\epsilon)\nonumber\\
    &= mW_{\epsilon}(\omega)+ \chi(\omega^{C^mQ^m},\cN^{\ox n}\circ \cE_n)+m\delta(m,\epsilon),  
\end{align}
where the second line is due to the decodability of the information:
the 
output state on systems $\hat{C}^m \hat{Q}^m$ is $2\sqrt{2\epsilon}$-close 
to the original state $C^m  Q^m$ in trace norm; then the inequality follows 
by applying the Fannes-Audenaert inequality 
\cite{Fannes1973,Audenaert2007}, where 
$\delta(m,\epsilon)=\sqrt{2\epsilon} \log(|C||Q|) + \frac1m h(\sqrt{2\epsilon})$.
The forth line is due to data-processing inequality.
The fifth line is due to Definition~\ref{def:Y_epsilon}.
The penultimate line follows from Lemma~\ref{lemma:Y_epsilon properties} point 4.
The last line follows from the definition of the Holevo information for the 
channel $\cN^{\ox n}\circ \cE_n$.

To obtain the converse bound in the final step at the end of this section, we will divide Eq.~(\ref{eq: converse 1}) and Eq.~(\ref{eq: converse 2}) by $n$ and then take  the limit of $n \to \infty$ and $\epsilon \to 0$. Note that the ratio $m/n$ is always bounded in that the number of copies of any source, which can be transmitted per channel use, is bounded. This implies that $\lim_{\epsilon \to 0}m\delta(m,\epsilon)/n=0$ and $\lim_{\epsilon \to 0}mY_{\epsilon}(\omega)/n=\lim_{\epsilon \to 0}mW_{\epsilon}(\omega)/n=0$, due to Lemma~\ref{lemma:Y_epsilon properties} points 3 and 5. Therefore, 
to make the expressions neater, we omit the terms $\delta(m,\epsilon)$, $Y_{\epsilon}(\omega)$ and $W_{\epsilon}(\omega)$ in the rest of the calculations.   

In both Eq.~(\ref{eq: converse 1}) and Eq.~(\ref{eq: converse 2}) we obtain 
bounds on the number of copies $m$. Therefore, $m$ is bounded by the minimum of these two bounds as
\begin{align}\label{eq: m bound}
    m &\leq \min \left\{  \frac{\sum_{c^m} p_{c^m}I_c(\omega_{c^m}^{Q^m},\cN^{\ox n}\circ \cE_n )}{S(Q|C)_{\omega}},\frac{\chi(\omega^{C^mQ^m},\cN^{\ox n}\circ \cE_n )}{S(C)_{\omega}}  \right\} \nonumber \\
    & \leq \max_{\Omega^{C^mQ^mR^m{C'}^m},\cE_n} \min \left\{  \frac{\sum_{c^m} q_{c^m}I_c(\Omega_{c^m}^{Q^m},\cN^{\ox n}\circ \cE_n )}{S(Q|C)_{\omega}},\frac{\chi(\Omega^{C^mQ^m},\cN^{\ox n}\circ \cE_n )}{S(C)_{\omega}}  \right\}   \\
    & = \max_{\Omega^{C^mQ^mR^m{C'}^m},\cE_n} \min \left\{  \frac{I(R^m\>\rangle B^n {C'}^m)_{\Theta}}{S(Q|C)_{\omega}},\frac{I(B^n:{C'}^m)_{\Theta}}{S(C)_{\omega}}  \right\}  \nonumber
\end{align}
where in the second inequality we take the maximum over the encoders $\cE_n$ and all states of the form 
$\Omega^{C^mQ^mR^m{C'}^m}$ defined as follows, and the coherent information and the mutual information are with respect to the state $\Theta$ defined below
\begin{align}
    \Omega^{C^mQ^nR^m{C'}^m}&=(\Omega^{CQR})^{\ox m}=\sum_{c^m} q_{c^m} \proj{c^m}^{C^m}  \ox  \proj{\Omega_{c^m}}^{Q^m R^m } \ox \proj{c^m}^{{C'}^m}, \nonumber\\
    \Xi^{A^nR^m{C'}^m}&=(\cE_n \ox \id_{R^m{C'}^m}) \Omega^{C^mQ^nR^m{C'}^m}  
    =\sum_{c^m} q_{c^m}     \Xi_{c^m}^{A^m  R^m } \ox \proj{c^m}^{{C'}^m} \nonumber\\
    \Theta^{B^nE_{\cN}R^m{C'}^m}&=(V_{\cN^{\ox n}} \ox \1_{R^m{C'}^m})\Xi^{A^nR^m{C'}^m} (V_{\cN^{\ox n}} \ox \1_{R^m{C'}^m})^{\dagger} 
    =\sum_{c^m} q_{c^m}     \Theta_{c^m}^{B^m E_{\cN} R^m } \ox \proj{c^m}^{{C'}^m},\nonumber
\end{align} 
where $V_{\cN^{\ox n}}$ denotes the Stinespring isometry of the channel $\cN^{\ox n}$.
Now, we show how we can get rid of the encoding map $\cE_n$ in the maximization of Eq.~(\ref{eq: m bound}).
To this end, we define extensions of the states  $\Xi$ and $\Theta$ by
considering the spectral decomposition of the state $ \Xi_{c^m}^{A^m  R^m }=\sum_j q_{j|c^m} \proj{\Xi_{c^mj}}^{A^n R^m }$ and introducing a new register $J$, which stores the indices of the eigenvalues of this state as the following
\begin{align}\label{eq: big Xi}
    \Xi^{A^nR^m{C'}^mJ} 
    =\sum_{c^m,j} q_{c^m} q_{j|c^m} \proj{\Xi_{c^mj}}^{A^n R^m } \ox \proj{c^m}^{{C'}^m} \ox \proj{j}^J, 
\end{align}
and we define the extended state $\Theta$ as 
\begin{align}
    \Theta^{B^nE_{\cN}R^m{C'}^mJ}&=\sum_{c^m,j} q_{c^m} q_{j|c^m}(V_{\cN^{\ox n}} \ox \1_{R^m{C'}^m}) \proj{\Xi_{c^mj}}^{A^n R^m } (V_{\cN^{\ox n}} \ox \1_{R^m{C'}^m})^{\dagger} \ox \proj{c^m}^{{C'}^m} \ox \proj{j}^J \nonumber \nonumber\\
    &=\sum_{c^m,j} q_{c^m} q_{j|c^m} \proj{\Theta_{c^mj}}^{B^m E_{\cN} R^m } \ox \proj{c^m}^{{C'}^m} \ox \proj{j}^J. \nonumber
\end{align}
For the above extended state, the following holds due to the strong sub-additivity of the entropy
\begin{align}
    I(R^m\>\rangle B^n {C'}^m)_{\Theta}&\leq I(R^m\>\rangle B^n {C'}^m J)_{\Theta} \nonumber \\
    I(B^m:{C'}^m)_{\Theta}&\leq I(B^m:{C'}^mJ)_{\Theta}. \nonumber 
\end{align}
Therefore, as we are interested in maximizing the mutual information and the coherent information 
of the channel, we can assume w.l.o.g. that the input state of the channel is a state
of the form of Eq.~(\ref{eq: big Xi}), which is cq with classical registers ${C'}^m J$ and 
pure quantum states of the ensemble $\ket{\Xi_{c^mj}}^{A^n R^m }$. More generally, 
we can assume that the input and output states of the channel is of the following form
\begin{align}\label{eq: varphi and sigma}
  \varphi^{A^nRX}&=\sum_x p_x \proj{\varphi_x}^{A^nR}\ox \proj{x}^X \nonumber   \\
  \sigma^{B^nRX}&=(\cN^{\ox n}\ox \id_{RX})\varphi^{A^nRX}. 
\end{align}
Therefore, we can rewrite Eq.~(\ref{eq: m bound}) as the following
\begin{align}
    m &\leq \max_{\Omega^{C^mQ^mR^m{C'}^m},\cE_n} \min \left\{  \frac{I(R^m\>\rangle B^n {C'}^m)_{\Theta}}{S(Q|C)_{\omega}},\frac{I(B^n:{C'}^m)_{\Theta}}{S(C)_{\omega}}  \right\}  \nonumber\\
    &\leq \max_{\Xi^{A^nR^m{C'}^mJ}} \min \left\{  \frac{I(R^m\>\rangle B^n {C'}^mJ)_{\Theta}}{S(Q|C)_{\omega}},\frac{I(B^n:{C'}^mJ)_{\Theta}}{S(C)_{\omega}}  \right\}  \nonumber\\
    &\leq  \max_{\varphi^{A^nRX}} \min \left\{  \frac{I(R \>\rangle B^n X)_{\sigma} }{S(Q|C)_{\omega}},\frac{I(B^n: X)_{\sigma} }{S(C)_{\omega}}  \right\}.  \nonumber
\end{align}
We divide the above equation by $n$ and take the limit of $n \to \infty$.
Then, we can write the arguments of the above minimization in terms of the Devetak-Shor curve
$(R_Q,f_{\DS}(R_Q))$
\begin{align}
    \lim_{n \to \infty}\frac{m}{n} \leq  \max_{R_Q} \min \left\{  \frac{R_Q}{S(Q|C)_{\omega}},\frac{f_{\DS}(R_Q)}{S(C)_{\omega}}  \right\}.  \nonumber
\end{align}
The maximum above is achieved for a value of $R_Q$ satisfying $\frac{R_Q} {S(Q|C)_{\omega}}=\frac{f_{\DS}(R_Q)}{S(C)_{\omega}}$.
Hence, considering this equality we can bound $mS(CQ)_{\omega}$ as
\begin{align}
    \lim_{n \to \infty}\frac{mS(CQ)_{\omega}}{n}&=\lim_{n \to \infty}\frac{mS(Q|C)_{\omega}+mS(C)_{\omega}}{n} \nonumber\\
    &\leq S(Q|C)_{\omega}  \frac{R_Q}{S(Q|C)_{\omega}}+S(C)_{\omega}\frac{f_{\DS}(R_Q)}{S(C)_{\omega}}  \nonumber\\
    &=R_Q+ f_{\DS}(R_Q). \nonumber
\end{align}
We can rewrite the converse bound as follows
\begin{align}
    \lim_{n \to \infty} \frac{mS(CQ)_{\omega}}{n}&\leq  \lim_{n \to \infty} \frac{1}{n}
    \max_{\substack{\varphi^{A^nRX}: \\\frac{I(B^n: X)_{\sigma}}{I(R \>\rangle B^n X)_{\sigma}}=\frac{S(C)_{\omega}}{S(Q|C)_{\omega}} }}  I_G(\varphi^{A^nRX},\cN^{\ox n}) \nonumber\\
    &= \lim_{n \to \infty} \frac{1}{n} \max_{\substack{\varphi^{A^nRX}: \\\frac{I(B^n: X)_{\sigma}}{I(R \>\rangle B^n X)_{\sigma}} 
    =\frac{S(C)_{\omega}}{S(Q|C)_{\omega}} }}  I(B^n: X)_{\sigma}+ I(R \>\rangle B^n X)_{\sigma}. \nonumber
\end{align}

\section{Achievability Proof}\label{sec: direct proof}
 
\noindent We show that the following rate is achievable for any state $\rho^{ARX}=\sum_x p(x)\proj{\rho_x}^{AR}\ox \proj{x}^X$ 
\begin{align}
    I_G(\rho^{ARX},\cN)=I(B:X)_{\sigma}+I(R\> \rangle BX)_{\sigma}, \nonumber
\end{align}
where   
$\sigma^{BRX}=\sum_x p(x)(\cN\ox\id_R)(\proj{\rho_x}^{AR})\ox \proj{x}^X$, and it
satisfies $\frac{I(B:X)_{\sigma}}{I(R\> \rangle BX)_{\sigma}}=\frac{S(C)_{\omega}}{S(Q|C)_{\omega}}$.
In Sec.~\ref{sec: single-letter curve} of the appendix, we further explain why we can always find such a state $\sigma$.
The regularized formula is then obtained by additional blocking.
Namely, we show that for any $\epsilon,\delta >0$ and sufficiently large $n$, there is an
$(n,\epsilon)$ code of rate $I_G(\rho^{ARX},\cN)-\delta$.
First the encoder, Alice, reduces the dimension of her systems $C^m Q^m$ by projecting them onto the typical subspace to obtain the following normalized state
\begin{align}\label{eq: normalized typical omega}
   \sum_{c^m \in \cT^m_{\delta,p}} \frac{p_{c^m}}{(1-\epsilon)} \proj{c^m}^{C^m}   \otimes \proj{\omega'_{c^m}}^{Q^m R^m {R'}^m} \otimes \proj{c^m}^{{C'}^m},   
\end{align}
where any for $c^m \in \cT^m_{\delta,p}$ and the conditional typical projector $\Pi_{\delta,\omega_{c^n}}^n$,  the normalized state $\omega'_{c^m}$ is defined as
 \begin{align}
  \proj{\omega'_{c^m}}^{Q^m R^m {R'}^m}=\frac{(\Pi_{\delta,\omega_{c^n}}^n\ox \1_{R^m {R'}^m})  \proj{\omega_{c^m}}^{Q^m R^m {R'}^m}(\Pi_{\delta,\omega_{c^n}}^n\ox \1_{R^m {R'}^m})^{\dagger}}{\Tr \left((\Pi_{\delta,\omega_{c^n}}^n\ox \1_{R^m {R'}^m})  \proj{\omega_{c^m}}^{Q^m R^m {R'}^m}(\Pi_{\delta,\omega_{c^n}}^n\ox \1_{R^m {R'}^m})^{\dagger}\right)}.  \nonumber
\end{align}
All the above notions and properties such as the typical subspace and its dimension
is explained in Sec.~\ref{sec: Miscellaneous definitions and facts}.
Alice uses the protocol of Lemma~\ref{lemma: DS extension} to transmit systems $C^m$ and $Q^m$. Note that the state in Eq.~(\ref{eq: normalized typical omega}) is exactly of the form of the state of Eq.~(\ref{eq: DS extended state}) considered in Lemma~\ref{lemma: DS extension} with $A'=Q^m$, $R=R^m{R'}^m$, $M=C^m$, and $M'={C'}^m$.
The dimensions of the classical system and the quantum system are bounded as $2^{m[S(C)_{\omega}+c\delta]}$ and $2^{m[S(Q|C)_{\omega}+c\delta]}$, respectively.   
We let Alice choose $m$ such that $2^{m[S(Q|C)_{\omega}+c\delta]}=2^{nI(R \> \rangle BX)_{\sigma}}$
and $2^{m[S(C)_{\omega}+c\delta]}=2^{nI(B:X)_{\sigma}}$. This implies that 
$mS(CQ)_{\omega}=n[I(B:X)_{\sigma}+I(R \> \rangle BX)_{\sigma}]-2m\delta$. 
By choosing these rates, Lemma~\ref{lemma: DS extension} ensures that systems $C^m$ and $Q^m$ are reconstructed at the decoder, and the fidelity on the systems $C^mQ^m R^m {R'}^m {C'}^m$ is preserved above $1-\epsilon$. This implies high fidelity on the reduced state on systems $C^mQ^m R^m$, due to monotonicity of the fidelity under partial trace.  

\bigskip

In Lemma~\ref{lemma: DS extension}, we obtain a trade-off between the classical and quantum rates of the following task. Consider the transmission of systems $A'$ and $M$ of the following source  
\begin{align}\label{eq: DS extended state}
        \omega^{A'RMM'}= \sum_{m=1}^{\mu} \alpha_{m} \proj{\phi_{m}}^{A'R}\ox\proj{m}^M
      \ox  \proj{m}^{M'},   
\end{align}
where for all $m$, $\ket{\phi_m}^{A'R}=\sum_{k=1}^{\kappa} \sqrt{\lambda_k} \ket{v_k}^{A'} \ket{w_k}^{R}$ is any pure state shared between $A'$ and $R$ with Schmidt rank $\kappa$ ($\sum_k \lambda_k=1$ and $0<\lambda_k \leq 1$ for all $k$), and $\{\alpha_m\}_{m=1}^{\mu}$ is a probability distribution. The systems $M$ and $M'$ are classical, and $R$ and $M'$ are reference systems.  
Alice applies an encoding operation $\cE_n:{A'}M \to A^n$ to obtain a state
supported on ${\cH_A}^{\ox n}$, and then transmits this state to Bob 
via $n$ independent uses of a quantum channel $\cN: A \to B$.
Bob receives a state supported on ${\cH_B}^{\ox n}$,
and applies a decoding operation $\cD_n:B^n \to {A'} {M}$ to reconstruct  the original source with the following fidelity
\begin{align}\label{eq: extended DS F}
    F\left( \omega^{A'RMM'} , (\cD_n\circ\cN^{\otimes n}\circ \cE_n \otimes \id_{RM'} ) \omega^{A'RMM'} \right)\geq 1-\epsilon.  
\end{align}
The classical and quantum rates of this code are defined as $R_C=\frac{\log \mu}{n}$ and $R_Q=\frac{\log \kappa}{n}$, respectively.
\begin{lemma}\label{lemma: DS extension}
%
%
%
%
For the above task, any rate pair $(R_Q=\frac{\log \kappa}{n},R_C=\frac{\log \mu}{n})$ is asymptotically achievable if 
\begin{align}
    R_C &\leq \lim_{l \to \infty} \frac{I(B^l : X)_{\sigma}}{l} \nonumber\\
    R_Q &\leq  \lim_{l \to \infty} \frac{I(R\>\rangle B^l X)_{\sigma}}{l}, \nonumber
\end{align}
where the above quantities are in terms of any state $\sigma$ of the following form 
\begin{align*} 
    \sigma^{B^lRX}&= (\cN^{\ox l}\ox \id_{RX})\varphi^{A^lRX} \quad \quad \text{and} \\
    \varphi^{A^lRX}&=\sum_{x} p(x)\proj{\varphi_{x}}^{A^lR}\ox \proj{x}^{X} , 
\end{align*}
where $\varphi_{x}^{A^l}$ is a state supported on $\cH_A^{\ox l}$ with a purification 
$\ket{\varphi_{x}}^{A^lR}$, and the dimension  of the system $X$ is bounded as $|X| \leq |\cH_A|^2+2$.
\end{lemma}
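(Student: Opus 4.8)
The plan is to derive the lemma from Theorem~\ref{Thm: Devetak-Shor}, used as a black box. The stated task is precisely the Devetak--Shor simultaneous classical--quantum transmission task, with two cosmetic differences: the classical message obeys the non-uniform prior $\{\alpha_m\}$ rather than the uniform one, and conditioned on the message $m$ the register $A'$ carries an arbitrary pure state $\ket{\phi_m}^{A'R}$ of Schmidt rank (at most) $\kappa$ instead of half of a maximally entangled state $\ket{\Phi_\kappa}^{A'R}$. I would argue that neither difference costs anything asymptotically. Fix $\delta>0$ and, for $n$ large, invoke Theorem~\ref{Thm: Devetak-Shor} to get an $(n,\epsilon_n)$ Devetak--Shor code --- message-dependent encoders $\cE_m:A'\to A^n$ with $A'\cong\CC^{\kappa}$, $\kappa=2^{nR_Q}$, and an instrument decoder --- with $\epsilon_n\to 0$, that transmits $\ket{\Phi_\kappa}^{A'R}$ together with a classical message in $[\mu]$, $\mu=2^{nR_C}$, at rates within $\delta$ of the stated bounds; whenever $\ket{\phi_m}$ has Schmidt rank below $\kappa$ I just embed it isometrically into $\CC^{\kappa}$, which perturbs the rates negligibly. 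The classical part then requires nothing further, since the Devetak--Shor classical decoder has error probability at most $\epsilon_n$ for \emph{every} individual message, hence at most $\epsilon_n$ against any prior $\{\alpha_m\}$; this handles $M$ and its copy $M'$.

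The substantive step, which I expect to be the main obstacle, is the quantum part. For each $m$ let $\Lambda_m$ be the effective CPTP map on $\CC^{\kappa}$ obtained by composing $\cE_m$, $\cN^{\otimes n}$, and the quantum part of the decoder; Theorem~\ref{Thm: Devetak-Shor} guarantees that its entanglement fidelity with the maximally mixed state $\pi_\kappa$ satisfies $F_e(\pi_\kappa,\Lambda_m)\geq 1-2\epsilon_n$, i.e.\ $\Lambda_m$ is close to the identity \emph{on average} over the maximally entangled input. This does not by itself give good fidelity for the specific states $\phi_m^{A'}$: a channel can be arbitrarily close to the identity on the maximally entangled state while being far from it on a single two--dimensional subspace (for instance a diagonal unitary with one large phase), and since $\kappa=2^{nR_Q}$ is exponentially large, the dimension factor that enters any such continuity estimate is fatal. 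The remedy is to randomize the code: draw a Haar--random unitary $U$ on $\CC^{\kappa}$, replace each $\cE_m$ by $\cE_m\circ U$, and have the decoder apply $U^{\dagger}$ to its quantum output, using the \emph{same} $U$ for every $m$ (legitimate because $A'$ is a single register shared by all message branches). The effective channel becomes $U^{\dagger}\Lambda_m U$, and a short computation with the degree--two twirl $\EE_U\,(U\sigma U^{\dagger})^{\otimes 2}$ gives, uniformly in $m$ and in $\phi_m^{A'}$,
\begin{align*}
  \EE_U\; F_e\!\left(\phi_m^{A'},\; U^{\dagger}\Lambda_m U\right)\;\geq\;1-O(\epsilon_n).
\end{align*}
Averaging over $m$ with the weights $\alpha_m$ and keeping a realization of $U$ no worse than this average fixes one unitary with $\sum_m\alpha_m\,F_e\!\left(\phi_m^{A'},U^{\dagger}\Lambda_m U\right)\geq 1-O(\epsilon_n)$.

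Finally I would assemble the pieces. Since $F\!\left(\phi_m^{A'R},(\Lambda\otimes\id_R)\phi_m^{A'R}\right)=\sqrt{F_e(\phi_m^{A'},\Lambda)}$ for any purification, the code with this fixed $U$ reconstructs each $\ket{\phi_m}^{A'R}$ with fidelity $1-O(\epsilon_n)$ conditioned on correct classical decoding, while decoding $M$ incorrectly with probability at most $\epsilon_n$; combining the two error events and converting entanglement fidelity to state fidelity via Eq.~(\ref{eq: Fuchs-van de Graaf}) gives the joint fidelity bound Eq.~(\ref{eq: extended DS F}) with error $O(\sqrt{\epsilon_n})\to 0$. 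Letting $n\to\infty$ and then $\delta\to 0$ yields the claimed rate region, and the cardinality bound $|X|\leq|\cH_A|^2+2$ is inherited verbatim from Theorem~\ref{Thm: Devetak-Shor}. Apart from the dimension--free transfer in the middle paragraph, every step is routine; the crux is recognising that a deterministic continuity argument breaks down here and that a random twirl of the code repairs it.
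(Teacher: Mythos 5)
Your proof is correct in substance but follows a genuinely different route from the paper's. The paper treats the Devetak--Shor construction as a \emph{white box}: it reuses the equivalence between the global fidelity criterion of Eq.~(\ref{eq: extended DS F}) and the reduced criterion of Eq.~(\ref{eq: DS F}) (via Lemma~\ref{lemma: almost pure reduced state}), and then argues that inside the DS protocol one may simply replace the entanglement-transmission subcode for $\ket{\Phi_\kappa}$ by a quantum code of the same ``type'' tailored to the specific pure state $\ket{\phi_m}^{A'R}$, citing \cite{Devetak-capacity-2005}, so that all of DS's calculations go through unchanged. You instead keep Theorem~\ref{Thm: Devetak-Shor} as a \emph{black box} and repair the mismatch between the maximally entangled input and the arbitrary $\ket{\phi_m}$ by conjugating the code with a shared Haar-random unitary. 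Your diagnosis that a naive continuity argument fails (entanglement fidelity on $\pi_\kappa$ controls fidelity on a fixed state only up to a dimension factor $\kappa=2^{nR_Q}$) is exactly right, and the twirl computation is sound: writing $F_e(\phi, \cU^\dagger\Lambda_m\cU)=\sum_k\abs{\Tr(\phi\, U^\dagger A_k U)}^2$ and applying the degree-two twirl with $\sum_k\abs{\Tr A_k}^2=\kappa^2F_e(\pi_\kappa,\Lambda_m)$ and $\sum_k\Tr(A_kA_k^\dagger)=\kappa$ gives $\EE_U F_e \geq 1-2\epsilon_n\,\kappa^2/(\kappa^2-1)$, uniformly in $\phi$, after which derandomization over the $\alpha_m$-average is routine. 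Your approach buys modularity --- no need to re-verify DS's internal ``type'' argument --- at the cost of the extra randomization step; the paper's buys brevity at the cost of relying on the reader to check that the substitution of subcodes really does not disturb the HSW/classical layer of the DS construction.

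One small point you should patch: the black-box guarantee $\Pr(\hat M\neq M)\leq\epsilon_n$ in Eq.~(\ref{eq: DS F}) is stated for the input $\Phi_\kappa^{A'R}$, so when you feed $\cE_m$ the rotated state $U\phi_m^{A'}U^\dagger$ the classical error probability is not automatically controlled --- a priori it could degrade by a factor of $\kappa$ via $\rho\leq\kappa\,\pi_\kappa$. The fix is the same twirl: the classical error probability is linear in the input density operator and $\EE_U\, U\phi_m^{A'}U^\dagger=\pi_\kappa$, so its Haar average equals the guaranteed $\leq\epsilon_n$, and you can fold this into the same averaging step that fixes $U$. With that addition the argument is complete.
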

\begin{proof}
Let's first a simpler task of 
transmitting 
systems $A'$ and $M$ of the following state 
\begin{align}
        \rho^{A'RMM'}=\proj{\Phi_{\kappa}}^{A'R}\ox \sum_{m=1}^{\mu}\frac{1}{\mu}\proj{m}^M
      \ox  \proj{m}^{M'},   \nonumber
\end{align}
$\ket{\Phi_{\kappa}}^{A'R}$ is a maximally entangled state of dimension $\kappa$. 
The systems $M$ and $M'$ are classical, and $R$ and $M'$ are reference systems. 
An $(n,\epsilon)$ code for this task consists of an encoder $\cE_n: A' M \to A^n$ and a decoder  $\cD_n: B^n \to A' M$ such that the  fidelity is preserved as the following
\begin{align}\label{eq: extended 2 DS F}
    F\left( \rho^{A'RMM'} , (\cD_n\circ\cN^{\otimes n}\circ \cE_n \otimes \id_{RM'} ) \rho^{A'RMM'} \right)\geq 1-\epsilon,  
\end{align}
and the encoded system $A^n$ is transmitted to the decoder via the channel $\cN^{\ox n}$.
%

This is the same task defined by Devetak and Shor except that the above fidelity is slightly different from Eq.~(\ref{eq: DS F}), where the fidelity on the reduced systems $A'R$ and $MM'$ is considered. 
However, we show that for very small $\epsilon$, these two fidelity criteria
are equivalent. One direction of the equivalence 
follows from  monotonicity of the fidelity under partial trace, namely,  the fidelity of Eq.~(\ref{eq: extended 2 DS F}) implies that the fidelity of Eq.~(\ref{eq: DS F}) on the reduced systems $A'R$ and $MM'$ is preserved. The other direction of the equivalence follows from Lemma~\ref{lemma: almost pure reduced state}. Namely, since the decoded system on $A'R$ is expected to be an almost pure state,  high fidelity of Eq.~(\ref{eq: DS F}) on the reduced systems $A'R$ and $MM'$ implies high fidelity on the global state of Eq.~(\ref{eq: extended 2 DS F}).

Now consider transmitting systems $A'$ and $M$ of the source defined in Eq.~(\ref{eq: DS extended state}).
%
Depending on the message $m$, instead of implementing a quantum code for transmitting half of
a maximally entangled state, as explained in the proof of Theorem~1 in \cite{Devetak-Shor-2005}, we use a quantum code for transmitting system $A'$ of the state $\ket{\phi_{m}}^{A'R}$. 
This does not change the ``type'' of the quantum code considered in \cite{Devetak-Shor-2005} in that
for any pure state on systems $A'R$, a quantum code with a given type can be constructed
 \cite{Devetak-capacity-2005}.
Hence, all the calculations in the proof of Theorem~1 in \cite{Devetak-Shor-2005}
follow in this case as well. 
\end{proof}

\section{Discussion}\label{sec: Discussion}
In this paper, we introduce a new quantity called the generalized information that mathematically generalizes the classical mutual information and the coherent information. We also define a task where 
the goal is to transmit maximal number of copies of system $A'$ of a general mixed state $\rho^{A'R}$ over $n$ uses of a quantum channel $\cN$. We define the generalized capacity as $mS(CQ)_{\omega}/n$, i.e. the number of copies of $\rho^{A'R}$ being transmitted  per channel use times the entropy of the classical and quantum systems in the Koashi-Imoto decomposition of $A'$.
We characterize the generalized capacity of a quantum channel, and show that the generalized capacity is equal  to the regularization of the 
generalized information. Unfortunately, the generalized capacity inherits its  multi-letter nature from the Holevo capacity and the quantum capacity. Namely, since 
the generalized capacity reduces to transmitting classical information or entanglement over a quantum channel in special cases, as explained in Remark~\ref{remark: special cases}, its rate cannot be simplified unless 
for specific channels for which the Holevo capacity or the quantum capacity rates are
single-letter.

The generalized capacity has an interesting geometric interpretation, and is directly related to the simultaneous transmission of classical and quantum information studied by Devetak and Shor in \cite{Devetak-Shor-2005}, where the optimal trade-off between the classical and quantum rates is found. The trade-off curve is illustrated in Fig.~\ref{fig: Thm_diagram}, and we call it the Devetak-Shor curve.
%
We can find the generalized capacity of a quantum channel $\cN$ for transmitting a state $\rho^{A'R}$, with KI decomposition $\omega_{\rho}^{CNQR}$ using the Devetak-Shor curve, as shown in Fig~\ref{fig: Thm_diagram}. 
We draw the line $R_C=\frac{S(C)_{\omega}}{S(Q|C)_{\omega}} R_Q$ and find the intersection of this line with the Devetak-Shor curve. Let $(R_Q^*,R_C^*)$ denote the intersected point. 
Then, the generalized capacity is $C_G(\rho^{A'R},\cN)=R_Q^*+R_C^*$. This implies that the maximal number of copies of the state $\rho^{A'R}$, which can be transmitted per channel use in the asymptotic limit, is  $\frac{R_Q^*+R_C^*}{S(CQ)_{\omega}}$. For a fully classical $\rho^{A'R}$ with $S(Q|C)_{\omega}=0$, the slop of the line is $\infty$, so the generalized capacity is equal to the classical capacity as $C_G(\rho^{A'R},\cN)=C_C$.
For a pure source $\ket{\rho}^{A'R}$  with $S(C)_{\omega}=0$,  the slop of the line is $0$, hence, the generalized capacity is equal to the quantum capacity as $C_G(\rho^{A'R},\cN)=C_Q$. The fact that the Devetak-Shor curve is below the line 
$R_C=C_C-R_Q$ implies that the generalized capacity, which equals to the sum $R_Q^*+R_C^*$, is maximized for a state $\rho^{A'R}$ when $A'$ and $R$ are both classical systems.

%
At the end of Sec.~\ref{sec: transmission task}, we discuss that to make the problem technically easier to deal with, we can consider the transmission of $\omega_{\rho}^{CQR}$, i.e. the KI-decomposition instead of the state  $\rho^{A'R}$ since they are 
equivalent sources in the sense that they are related via CPTP maps in both directions. 
Then, we construct a code to transmit the classical system $C^m$ and the quantum system $Q^m$
of $(\omega_{\rho}^{CQR})^{\ox m}$. 
For this, we apply Lemma~\ref{lemma: DS extension}, which is an extension of the protocol designed by Devetak and Shor for simultaneous transmission of classical and quantum information as a sub-protocol \cite{Devetak-Shor-2005}.

\bigskip

\textbf{Acknowledgments.}
I thank  Andreas Winter and Patrick Hayden for insightful discussions and comments on the initial draft.  
The author was supported by the Marie Sk{\l}odowska-Curie Actions (MSCA)
European Postdoctoral Fellowships (Project 101068785-QUARC).

\appendix
\section{The fidelity equivalence} 

\begin{lemma}\label{lemma: almost pure reduced state}
    Let $\xi^{AB}$ be a state with the reduced state on system $A$ being almost pure as the following 
    \begin{align*}
        F(\xi^A,\proj{\psi}^A)\geq 1-\epsilon. 
    \end{align*}
     Then, the state $\xi^{AB}$ is an almost product state in the following sense
    \begin{align*}
        &F(\xi^{AB},\proj{\psi}^A \ox \xi^B)\geq 1-4\epsilon.
    \end{align*}
\end{lemma}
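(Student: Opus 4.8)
The plan is to exhibit, through Uhlmann's theorem, a pair of purifications of $\xi^{AB}$ and of $\proj{\psi}^A\otimes\xi^B$ whose inner product is already at least $1-4\epsilon$; since the fidelity is the maximum of $|\langle\cdot|\cdot\rangle|$ over all purifications, this will suffice.

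First I would record two elementary facts. Because the fidelity of a state with a pure state is $F(\xi^A,\proj{\psi}^A)=\sqrt{\bra{\psi}\xi^A\ket{\psi}}$, the hypothesis gives $p:=\bra{\psi}\xi^A\ket{\psi}\ge(1-\epsilon)^2$. Introduce the positive semidefinite operator $\tilde\xi^B:=(\bra{\psi}^A\otimes\1^B)\,\xi^{AB}\,(\ket{\psi}^A\otimes\1^B)$ on $B$, which satisfies $\Tr\tilde\xi^B=p$; completing $\ket{\psi}$ to an orthonormal basis $\{\ket{\psi},\ket{e_i}\}$ of $A$ shows $\xi^B=\tilde\xi^B+\sum_i(\bra{e_i}^A\otimes\1^B)\,\xi^{AB}\,(\ket{e_i}^A\otimes\1^B)\ge\tilde\xi^B\ge 0$. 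By operator monotonicity of the square root, $\sqrt{\tilde\xi^B}\le\sqrt{\xi^B}$, so $\bigl\|\sqrt{\tilde\xi^B}\sqrt{\xi^B}\bigr\|_1\ge\Tr\bigl(\sqrt{\tilde\xi^B}\sqrt{\xi^B}\bigr)\ge\Tr\tilde\xi^B=p$, i.e.
\begin{align*}
  F\!\left(\tfrac1p\tilde\xi^B,\ \xi^B\right)=\frac1{\sqrt p}\,\bigl\|\sqrt{\tilde\xi^B}\sqrt{\xi^B}\bigr\|_1\ \ge\ \sqrt p .
\end{align*}

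Next I would fix a purification $\ket{\Xi}^{ABC}$ of $\xi^{AB}$, with $C$ chosen large enough that it can also purify $\xi^B$. Splitting $\ket{\Xi}=(\proj{\psi}^A\otimes\1^{BC})\ket{\Xi}+\bigl((\1^A-\proj{\psi}^A)\otimes\1^{BC}\bigr)\ket{\Xi}$, the two summands are orthogonal and the first has squared norm $\bra{\psi}\xi^A\ket{\psi}=p$, hence it equals $\sqrt p\,\ket{\psi}^A\otimes\ket{\beta_0}^{BC}$ for a unit vector $\ket{\beta_0}^{BC}$; tracing out $C$ one checks $\ket{\beta_0}$ is a purification of $\tfrac1p\tilde\xi^B$. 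Applying Uhlmann's theorem to this fixed purification $\ket{\beta_0}^{BC}$ of $\tfrac1p\tilde\xi^B$ yields a purification $\ket{\beta_\star}^{BC}$ of $\xi^B$ with $\langle\beta_0|\beta_\star\rangle=F(\tfrac1p\tilde\xi^B,\xi^B)\ge\sqrt p$. Then $\ket{\Theta}:=\ket{\psi}^A\otimes\ket{\beta_\star}^{BC}$ is a purification of $\proj{\psi}^A\otimes\xi^B$; since the second summand of $\ket{\Xi}$ has its $A$-component orthogonal to $\ket{\psi}$, one gets $\langle\Xi|\Theta\rangle=\sqrt p\,\langle\beta_0|\beta_\star\rangle$, and therefore
\begin{align*}
  F\!\left(\xi^{AB},\ \proj{\psi}^A\otimes\xi^B\right)\ \ge\ |\langle\Xi|\Theta\rangle|\ =\ \sqrt p\,|\langle\beta_0|\beta_\star\rangle|\ \ge\ p\ \ge\ (1-\epsilon)^2\ \ge\ 1-4\epsilon ,
\end{align*}
the first inequality being Uhlmann's theorem (an overlap of two specific purifications lower-bounds the fidelity). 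In fact this argument gives the sharper bound $F\ge 1-2\epsilon$; alternatively one may reach $1-4\epsilon$ by combining $F(\xi^{AB},\proj{\psi}^A\otimes\tfrac1p\tilde\xi^B)\ge\sqrt p$ and $F(\proj{\psi}^A\otimes\tfrac1p\tilde\xi^B,\proj{\psi}^A\otimes\xi^B)\ge\sqrt p$ via the triangle inequality for the Bures angle $\arccos F$. I expect the only points needing care to be the bookkeeping of the purifying system $C$ and the appeal to operator monotonicity of $x\mapsto\sqrt x$; everything else is routine.
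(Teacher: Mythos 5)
Your proof is correct, and it takes a genuinely different route from the paper's. The paper first bounds $F(\xi^{AB},\xi^A\ox\xi^B)$ from below by writing a purification of $\xi^{AB}$ in Schmidt form and reducing the overlap to a function of the eigenvalues of $\xi^A$, dominated by $\norm{\xi^A}_\infty^{3/2}\geq(1-\epsilon)^3$; it then separately notes $F(\xi^A\ox\xi^B,\proj{\psi}^A\ox\xi^B)=F(\xi^A,\proj{\psi}^A)\geq 1-\epsilon$ and stitches the two estimates together by passing to trace distance via the Fuchs--van de Graaf inequalities and the triangle inequality. You instead build explicit purifications of $\xi^{AB}$ and of $\proj{\psi}^A\ox\xi^B$ whose overlap you can compute in one shot, using the decomposition of $\ket{\Xi}$ along $\proj{\psi}^A$ and Uhlmann's theorem applied to the ``pinched'' state $\tfrac1p\tilde\xi^B$ versus $\xi^B$; the orthogonality of the discarded component makes the overlap exactly $\sqrt{p}\,\langle\beta_0|\beta_\star\rangle\geq p=\bra{\psi}\xi^A\ket{\psi}$. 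What your approach buys is twofold: it gives the sharper additive bound $1-2\epsilon$ (indeed $(1-\epsilon)^2$), and it avoids the fidelity-to-trace-distance round trip entirely --- a nontrivial gain, since converting $F\geq 1-3\epsilon$ into a trace-distance bound via Fuchs--van de Graaf costs a square root, so the paper's concluding step as literally written only delivers $1-O(\sqrt{\epsilon})$ rather than $1-4\epsilon$ (your parenthetical Bures-angle variant, with $\cos(2\arccos\sqrt{p})=2p-1\geq 1-4\epsilon$, is the clean way to recover the stated constant along hybrid lines). The only steps in your argument that warrant the care you already flag are the operator monotonicity of the square root used to get $\|\sqrt{\tilde\xi^B}\sqrt{\xi^B}\|_1\geq\Tr\tilde\xi^B$ and the choice of a purifying system $C$ large enough to carry a purification of $\xi^B$; both are handled correctly.
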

\begin{proof}
From the fidelity criterion on system $A$ we obtain
\begin{align}\label{eq: operator norm xi_A} 
    1-\epsilon & \leq F(\xi^A,\proj{\psi}^A) \nonumber\\
    &=\sqrt{\bra{\psi} \xi^A \ket{\psi}} \nonumber \\
    &\leq \sqrt{ \norm{\xi^A}_{\infty}},
\end{align}
where the last line follows from the definition of the operator norm.
Now, consider a purification of $\xi^{AB}$ with the following Schmidt decomposition 
\begin{align*}
\ket{\xi}^{ABR}
     = \sum_{i} \sqrt{\lambda_i}\ket{v_i}^{A} \ket{w_i}^{BR}. 
\end{align*}
Considering the above decomposition, we obtain
\begin{align}
  \label{eq:almost-pure}
   F\left( {\xi}^{AB},
                             {\xi}^{A} \ox {\xi}^{B} \right)
   &\geq F\left( \ketbra{\xi}^{ABR},
                             {\xi}^{A} \ox {\xi}^{BR} \right) \nonumber\\
    &= \sqrt{\bra{\xi}{\xi}^{A} \ox {\xi}^{BR} 
                               \ket{\xi}}                                                                 \nonumber\\
    &= \sum_i \lambda_i^{\frac32} \nonumber\\
    & \geq \norm{\xi^{A}}_{\infty}^{\frac32} \nonumber\\
    &\geq (1-\epsilon)^3 
     \geq 1 - 3\epsilon,   
\end{align}
where the penultimate inequality follows since the sum of the eigenvalues of an operator is bigger than its largest eigenvalue. 
The last line follows from Eq.~(\ref{eq: operator norm xi_A}).
We also note that 
\begin{align}
    F(\xi^{A}\ox \xi^B,\proj{\psi}^A \ox \xi^B)=F(\xi^{A},\proj{\psi}^A) \nonumber
    \geq 1-\epsilon.
\end{align}
We use the above equation and Eq.~(\ref{eq:almost-pure}) to compute the fidelity
$F(\xi^{AB},\proj{\psi}^A \ox \xi^B)$ by
applying the Fuchs-van de Graaf inequality in Eq.~(\ref{eq: Fuchs-van de Graaf})  and the triangle inequality to the trace norm.
\end{proof}

\section{Proof of Lemma~\ref{lemma:Y_epsilon properties}}\label{sec: Proof of Lemma Y W epsilon}
\begin{enumerate}
\item The definition of the functions $Y_{\epsilon}(\omega)$ and $W_{\epsilon}(\omega)$
  directly implies that they are non-decreasing functions of $\epsilon$.

\item We first prove the concavity of $Y_{\epsilon}(\omega)$. 
  Let $U_1:C Q \hookrightarrow \hat{C}  \hat{Q} E$ and 
  $U_2:C  Q \hookrightarrow \hat{C} \hat{Q} E$ be the isometries attaining the 
  maximum for $\epsilon_1$ and $\epsilon_2$, respectively, which act as 
  follows on the purification $\ket{\omega}^{C Q R R' C' C''}$ of the state $\omega^{C N Q R R' C'}$ with $C''$ as the purifying system
  \begin{align}
    \ket{\tau_1}^{\hat{C}  \hat{Q} E R R' C' C''}
        &=(U_1 \otimes \1_{R R' C' C''}) \ket{\omega}^{C  Q R R' C' C''} \nonumber\\
    \ket{\tau_2}^{\hat{C}  \hat{Q} E R R'C' C''}
        &=(U_2 \otimes \1_{R R'C' C''}) \ket{\omega}^{C Q R R' C' C''}, \nonumber  
  \end{align}
  where $\Tr_{C''}[\proj{\omega}^{C  Q R R'C' C''}]=\omega^{C Q R R' C'}$. 
  For $0\leq \lambda \leq 1$, define the isometry 
  $U_0:C Q \hookrightarrow \hat{C} \hat{Q} E F F'$ which acts as 
  \begin{equation}
    \label{eq: isometry U in convexity}
    U_0 := \sqrt{\lambda} U_1 \otimes \ket{11}^{FF'} + \sqrt{1-\lambda} U_2 \otimes \ket{22}^{FF'},
  \end{equation}
  where systems $F$ and $F'$ are qubits, and
  which leads to the state
  \begin{align}
    \ket{\tau_0}^{\hat{C} \hat{Q} EFF' RR' C' C''}&:=(U_0 \otimes \1_{R R' C' C''}) \ket{\omega}^{C  Q RR' C' C''} \nonumber\\
      &= \sqrt{\lambda}\ket{\tau_1}^{\hat{C}  \hat{Q} E RR' C' C''} \ket{11}^{FF'}
        + \sqrt{1-\lambda}\ket{\tau_2}^{\hat{C} \hat{Q} E RR' C' C''} \ket{22}^{FF'}.   
  \end{align}
  Then, $U_0$ defines its state $\tau_0$, for which the reduced state on the systems 
  $\hat{C} \hat{Q}  R $ is 
  \begin{align} \label{eq: tau in convexity proof}
    \tau_0^{\hat{C} \hat{Q} R} 
      =\lambda \tau_1^{\hat{C} \hat{Q} R }+ (1-\lambda) \tau_2^{\hat{C} \hat{Q} R}. 
  \end{align}  
  Therefore, the fidelity for the state $\tau$ is bounded as follows:
  \begin{align}\label{eq:fidelity in convexity}
    F(\omega^{C  Q R} ,\tau_0^{\hat{C}  \hat{Q} R} )
      &= F(\omega^{C  Q R} ,\lambda \tau_1^{\hat{C} \hat{Q} R}
        + (1-\lambda) \tau_2^{\hat{C}   \hat{Q} R}) \nonumber \\
      &= F(\lambda \omega^{C Q R}+(1-\lambda)\omega^{C  Q R},
           \lambda \tau_1^{\hat{C}  \hat{Q} R}
            + (1-\lambda) \tau_2^{\hat{C}  \hat{Q} R}) \nonumber\\
      &\geq \lambda F( \omega^{C Q R},\tau_1^{\hat{C}  \hat{Q} R})
            +(1-\lambda)F( \omega^{C  Q R},\tau_2^{\hat{C}  \hat{Q} R}) \nonumber\\
     &\geq 1-\left( \lambda\epsilon_1 +(1-\lambda)\epsilon_2 \right).
  \end{align}
  The first inequality is due to simultaneous concavity of the fidelity in both
  arguments;
  the last line follows by the definition of the isometries $U_1$ and $U_2$.
  Thus, the isometry $U_0$ yields a fidelity of at least 
  $1-\left( \lambda\epsilon_1 +(1-\lambda)\epsilon_2 \right) =: 1-\epsilon$.
  According to Definition \ref{def:Y_epsilon}, we obtain
  \begin{align}
    Y_\epsilon(\omega) &\geq S( \hat{Q} RR'|\hat{C})_{\tau_0} \nonumber\\
                       &\geq S(\hat{Q} RR'|\hat{C} F)_{\tau_0} \nonumber\\
                       &= \lambda S(  \hat{Q} RR'|\hat{C} )_{\tau_1}+(1-\lambda)S(\hat{Q} RR'|\hat{C} )_{\tau_2} \nonumber\\
                       &= \lambda Y_{\epsilon_1}(\omega)+(1-\lambda)Y_{\epsilon_2}(\omega),
  \end{align}
  where the second line is due to strong sub-additivity of the entropy. 
  The third and the last lines follow from the definition of $\tau_0$ and $\tau_1,\tau_2$,
  respectively. \\
  To prove the concavity of $W_{\epsilon}(\omega)$, we take exactly the same above steps, and then by 
  Definition \ref{def:Y_epsilon}, we obtain
  \begin{align}
    W_\epsilon(\omega) &\geq S( \hat{C} |C')_{\tau_0} \nonumber\\
                       &\geq S(\hat{C} |C' F)_{\tau_0} \nonumber\\
                       &= \lambda S( \hat{C} |C' )_{\tau_1}+(1-\lambda)S(\hat{C} |C' )_{\tau_2} \nonumber\\
                       &= \lambda W_{\epsilon_1}(\omega)+(1-\lambda)W_{\epsilon_2}(\omega),
  \end{align}
  where the second line is due to strong sub-additivity of the entropy. 
  The third and the last lines follow from the definition of $\tau_0$ and $\tau_1,\tau_2$,
  respectively.
\item The functions $Y_{\epsilon}(\omega)$ and $W_{\epsilon}(\omega)$ are concave for $\epsilon \geq 0 $, so it is 
  continuous 
  for $\epsilon > 0$. 
  The concavity implies furthermore that they are lower semi-continuous at 
  $\epsilon=0$. On the other hand, since the fidelity and the conditional entropy are  continuous functions of CPTP maps, and the domain both optimizations 
  is a compact set, we conclude that $Y_\epsilon(\omega)$ and $W_\epsilon(\omega)$ are also upper 
  semi-continuous at $\epsilon=0$, so they are continuous at $\epsilon=0$ 
  \cite[Thms.~10.1 and 10.2]{Rockafeller}. 

\item We  prove 
  $Y_{\epsilon}(\omega_1 \otimes \omega_2) \leq  Y_{\epsilon}(\omega_1) +Y_{\epsilon}(\omega_2)$.
  In the definition of $Y_{\epsilon}(\omega_1 \otimes \omega_2)$, let the isometry 
  $U_0:C_1 Q_1 C_2 Q_2 \hookrightarrow \hat{C}_1 \hat{Q}_1 \hat{C}_2 \hat{Q}_2 E$
  be the one attaining the maximum, which acts on the following purification of the extended source states with purifying 
  systems $C''_1$ and $C''_2$: 
  \begin{equation}
    \label{eq:U0-action}
    \ket{\tau}^{\hat{C}_1 \hat{Q}_1 \hat{C}_2 \hat{Q}_2 E R_1 R'_1 C'_1 C''_1  R_2 R'_2 C'_2  C''_2}
               =(U_0 \otimes \1_{R_1 R'_1 C'_1 C''_1  R_2 R'_2 C'_2  C''_2})\ket{\omega_1}^{C_1 Q_1 R_1 R'_1 C'_1 C''_1}
                                                   \otimes    \ket{\omega_2}^{C_2 Q_2 R_2 R'_2 C'_2  C''_2}.
  \end{equation}
  By the definition, the fidelity is bounded by
  \begin{align*}
    F(\omega_1^{C_1 Q_1 R_1} \otimes \omega_2^{C_2 Q_2 R_2},
      \tau^{\hat{C}_1  \hat{Q}_1 \hat{C}_2   \hat{Q}_2 R_1 R_2}) \geq 1- \epsilon.   
  \end{align*}
  We can define an isometry 
  $U_1:C_1 Q_1 \hookrightarrow \hat{C}_1 \hat{Q}_1 E_1$ 
  acting only on systems $C_1 Q_1$, by letting
  $U_1 = (U_0 \otimes \1_{R_2 R_2' C_2' C''_2 })(\1_{C_1 Q_1} \otimes \ket{\omega_2}^{C_2 Q_2 R_2 R'_2 C_2' C''_2 })$
  and with the environment $E_1 := \hat{C}_2 \hat{Q}_2 E R_2 R'_2 C'_2 C''_2 $.
  This implies that 
  $\ket{\tau_1}^{\hat{C}_1  \hat{Q}_1 R_1 R_1' C_1' C''_1  E_1} 
   := (U_1 \otimes \1_{R_1R_1' C_1' C''_1 })\ket{\omega_1}^{C_1 Q_1 R_1 R_1' C_1' C''_1}$ 
  has the same reduced state on $\hat{C}_1 \hat{Q}_1 R_1$ as $\tau$ from
  Eq. (\ref{eq:U0-action}).
  This isometry preserves the fidelity for $\omega_1$, which follows from monotonicity 
  of the fidelity under partial trace:
  \begin{align*}
     F(\omega_1^{C_1 Q_1 R_1},\tau_1^{\hat{C}_1  \hat{Q}_1 R_1}) 
       &= F(\omega_1^{C_1  Q_1 R_1},\tau^{\hat{C}_1 \hat{Q}_1 R_1}) \\
       &\geq F(\omega_1^{C_1  Q_1 R_1} \otimes \omega_2^{C_2  Q_2 R_2},
               \tau^{\hat{C}_1  \hat{Q}_1 \hat{C}_2  \hat{Q}_2 R_1 R_2}) \\
       &\geq 1- \epsilon.   
  \end{align*}
  By the same argument, there is an isometry 
  $U_2:C_2  Q_2\hookrightarrow \hat{C}_1 \hat{Q}_1 \hat{C}_2  \hat{Q}_2 E R_1 R'_1 C'_1 C''_1$ and state $\ket{\tau_2}^{\hat{C}_2  \hat{Q}_2 R_2 R_2' C_2' C''_2  E_2} 
   := (U_2 \otimes \1_{R_2R_2' C_2' C''_2 })\ket{\omega_2}^{C_2 Q_2 R_2 R_2' C_2' C''_2}$
  with output system $\hat{C}_2  \hat{Q}_2$ and 
  environment $E_2:=\hat{C}_1 \hat{Q}_1 E R_1 R'_1 C'_1 C''_1$, such that
  \begin{align*}
    F(\omega_2^{C_2  Q_2 R_2},\tau_2^{\hat{C}_2 \hat{Q}_2 R_2}) 
      &=    F(\omega_2^{C_2  Q_2 R_2},\tau^{\hat{C}_2  \hat{Q}_2 R_2}) \\
      &\geq F(\omega_1^{C_1  Q_1 R_1} \otimes \omega_2^{C_2  Q_2 R_2},
              \tau^{\hat{C}_1  \hat{Q}_1 \hat{C}_2  \hat{Q}_2 R_1 R_2}) \\
      &\geq 1- \epsilon.   
  \end{align*}
  Therefore, we obtain:
  \begin{align}
     Y_{\epsilon}&(\omega_1) +Y_{\epsilon}(\omega_2)-Y_{\epsilon}(\omega_1 \otimes \omega_2) \nonumber\\
     &\geq
     S(\hat{Q}_1 R_1R'_1|\hat{C}_1)_{\tau_1}+S(\hat{Q}_2 R_2R'_2|\hat{C}_2)_{\tau_2}-S(\hat{Q}_1 \hat{Q}_2 R_1R'_1 R_2R'_2|\hat{C}_1\hat{C}_2)_{\tau} \nonumber \\
     &=
     S(\hat{Q}_1 R_1R'_1|\hat{C}_1)_{\tau}+S(\hat{Q}_2 R_2R'_2|\hat{C}_2)_{\tau}-S(\hat{Q}_1 \hat{Q}_2 R_1R'_1 R_2R'_2|\hat{C}_1\hat{C}_2)_{\tau} \nonumber \\
     &=S(\hat{Q}_1 R_1R'_1|\hat{C}_1)_{\tau}+S(\hat{Q}_2 R_2R'_2|\hat{C}_2)_{\tau}-S(\hat{Q}_1 R_1R'_1 |\hat{C}_1\hat{C}_2)_{\tau}
     -S( \hat{Q}_2  R_2R'_2|\hat{C}_1\hat{C}_2\hat{Q}_1 R_1R'_1)_{\tau} \nonumber\\
     &\geq 0 \nonumber,
  \end{align}
  where the first line is due to Definition~\ref{def:Y_epsilon}.
  The second line follows from the definition of $\tau_1$ and $\tau_2$ which satisfy
    $\tau_1^{\hat{C}_1  \hat{Q}_1 R_1 R_1' C_1' C''_1 }=\tau^{\hat{C}_1  \hat{Q}_1 R_1 R_1' C_1' C''_1 }$ and $\tau_2^{\hat{C}_2  \hat{Q}_2 R_2 R_2' C_2' C''_2 }=\tau^{\hat{C}_2  \hat{Q}_2 R_2 R_2' C_2' C''_2 }$.
  The third line is due to the chain rule.
  The last line follows from strong sub-additivity of the entropy.\\ 
  To prove $W_{\epsilon}(\omega_1 \otimes \omega_2) \leq  W_{\epsilon}(\omega_1) +W_{\epsilon}(\omega_2)$, we take exactly the same steps as above, then we obtain
  \begin{align}
     W_{\epsilon}&(\omega_1) +W_{\epsilon}(\omega_2)-W_{\epsilon}(\omega_1 \otimes \omega_2) \nonumber\\
     &\geq
     S(\hat{C}_1|C'_1)_{\tau_1}+S(\hat{C}_2|C'_2)_{\tau_2}-S(\hat{C}_1 \hat{C}_2|C'_1 C'_2)_{\tau} \nonumber \\
     &=
     S(\hat{C}_1 |C'_1)_{\tau}+S(\hat{C}_2|C'_2)_{\tau}-S(\hat{C}_1 \hat{C}_2|C'_1 C'_2)_{\tau} \nonumber \\
     &=S(\hat{C}_1|C'_1)_{\tau_1}\!\!+\!S(\hat{C}_2|C'_2)_{\tau_2}\!\!\!-\!S(\hat{C}_1 |C'_1 C'_2)_{\tau}
     \!\!-\!S( \hat{C}_2|C'_1 C'_2 \hat{C}_1)_{\tau} \nonumber\\
     &\geq 0 \nonumber,
  \end{align}
  where the first line is due to Definition~\ref{def:Y_epsilon}.
  The second line follows from the definition of $\tau_1$ and $\tau_2$ which satisfy
    $\tau_1^{\hat{C}_1  C_1' }=\tau^{\hat{C}_1  C_1'}$ and $\tau_2^{\hat{C}_2  C'_2 }=\tau^{\hat{C}_2  C'_2 }$.
  The third line is due to the chain rule.
  The last line follows from strong sub-additivity of the entropy. 
\item According to Theorem~\ref{thm: KI decomposition} \cite{KI2002,Hayden2004}, 
  any isometry $U:C N Q \rightarrow \hat{C}  \hat{Q} E$ acting on the state 
  $\omega^{C Q R C'}$ which preserves the reduced state on systems $C Q R $ with fidelity equal to 1, acts as an identity on systems $C$ and $Q$
  \begin{align*}
    (\1_{CQ} \otimes \1_{RC'}) \omega^{C  Q R C'}(\1_{CQ}^{\dagger} \otimes \1_{RC'})
      =\sum_{c} p_c \proj{c}^{C}\otimes  \omega_{c}^{Q R} \otimes \proj{c}^{C'}.
  \end{align*}
  Therefore,  in Definition~\ref{def:Y_epsilon} for $\epsilon=0$, the final state is
  \begin{align*}
    \tau^{\hat{C}  \hat{Q}   R R' C'}
      = \sum_{c} p_c \proj{c}^{C}\ox \proj{\omega_{c}}^{Q R R'} \otimes \proj{c}^{C'}.
  \end{align*} 
  Thus, we can directly evaluate
\begin{align*}
      Y_0(\omega)&=S(\hat{Q} RR'|\hat{C})_\tau=S(Q RR'|C)_\omega=0,\\
      W_0(\omega)&=S(\hat{C} |C')_\tau=S(C|C')_\omega=0,
\end{align*}
concluding the proof.
\hfill\qedsymbol
\end{enumerate}

\section{The set $\cS_G^{(l)}$ of points $(I(R\> \rangle B^lX)_{\sigma}/l,I(B^l:X)_{\sigma}/l)$}\label{sec: single-letter curve}
For a quantum channel $\cN:A \to B$ and a natural number $l$,  we define the set $\cS_G^{(l)}$ of points $(R_Q,R_C)=(I(R\> \rangle B^lX)_{\sigma}/l,I(B^l:X)_{\sigma}/l)$ arising from applying the channel $\cN^{\ox l}$ to different states 
$\rho^{A^lRX}$ such that $I(R\> \rangle B^lX)_{\sigma} \geq 0$.
We note that the coherent information can be negative, however,  we are only interested 
in non-negative values, so we define the set only for states with $R_Q \geq 0$.
In the limit of many copies of the channel, the set $\cS_G^{(l)}$ is equal to the capacity region in Fig.~\ref{fig: DS curve}.
Here, we would like to understand the set $\cS_G^{(l)}$.

\begin{figure}[t] 
  \includegraphics[width=0.5\textwidth]{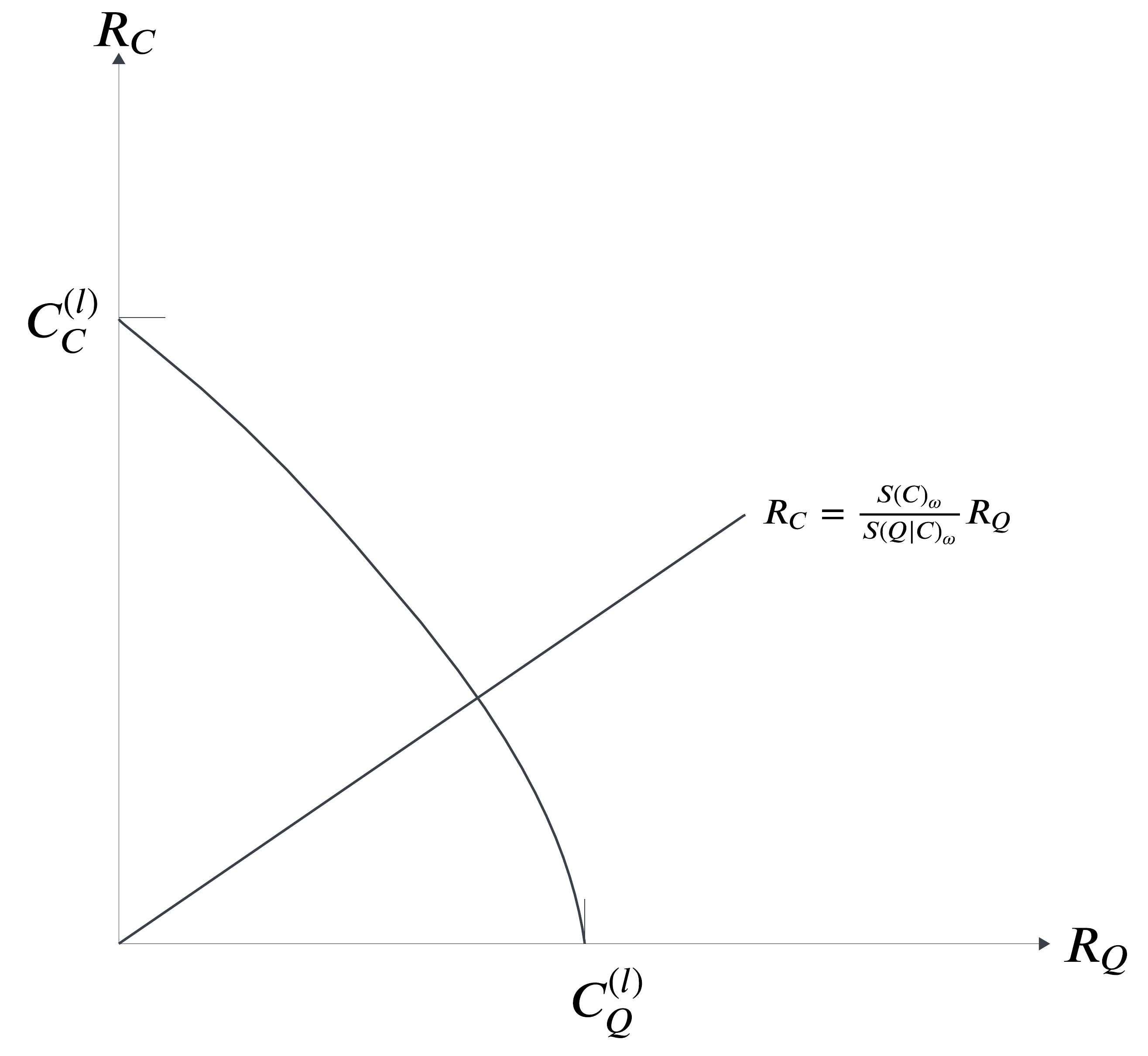} 
  \caption{The set $\cS_G^{(l)}$ is the area under the curve with $R_Q\geq 0$ and $R_C \geq 0$}
  \label{fig: S_G}
\end{figure}

We note that the maximum in $\max_{\rho^{A^lRX}}I(R\> \rangle B^lX)_{\sigma}/l$ is attained for a pure state $\ket{\varphi_0}^{A^lR}$. This follows since this coherent information is a convex combination as $I(R\> \rangle B^lX)_{\sigma}/l=\sum_x p(x) I(R\> \rangle B^l)_{\sigma_x}/l$, and one of the terms in any convex combination is greater than any other term. 
Therefore, there is a pure state $\ket{\varphi_0}^{A^lR}$ for which the maximum is attained.
This state with $X=\emptyset$ corresponds to the point  $(C_Q^{(l)},0)$ on $\cS_G^{(l)}$.

The maximum in $\max_{\rho^{A^lRX}}I(B^l:X)_{\sigma}/l$ is attained for a state
with $R=\emptyset$, i.e. $\varphi_1^{A^lX}=\sum_x p(x)\proj{\varphi_{1x}}^{A^l}\ox \proj{x}^X$. To see this, first trace out system $R$ since $I(B^l:X)_{\sigma}$ does not depend on system $R$, and
assume that $\rho_1^{A^lX}$ is the state attaining the maximum, and the state $\sigma_1$ arises from applying the channel as 
\begin{align}
    \rho_1^{A^lX}&=\sum_x p(x){\rho_{1x}}^{A^l}\ox \proj{x}^X, \nonumber\\
    \sigma_1^{B^lX}&=(\cN^{\ox l}\ox\id_X) \rho_1^{A^lX} 
    =\sum_x p(x){\sigma_{1x}}^{B^l}\ox \proj{x}^X. \nonumber
\end{align}
For a given $x$, consider the spectral decomposition of ${\rho_{1x}}^{A^l}=\sum_{y} p(y|x)\proj{\rho_{1xy}}^{A^l}$ and define  extensions of the states $\rho_1^{A^lX}$ and $\sigma_1^{B^lX}$ as
\begin{align}
    \rho_1^{A^lXY}
    &=\sum_{x,y} p(x)p(y|x)\proj{\rho_{1xy}}^{A^l}\ox \proj{x}^X \ox \proj{y}^Y, \nonumber\\
    \sigma_1^{A^lXY}
    &=(\cN^{\ox l}\ox  \id_{XY})\rho_1^{A^lXY}=\sum_{x,y} p(x)p(y|x){\sigma_{1xy}}^{B^l}\ox \proj{x}^X \ox \proj{y}^Y. \nonumber
\end{align}
For the above state $I(B^l:X)_{\sigma_1} \leq I(B^l:XY)_{\sigma_1}$ holds. Therefore, 
w.l.o.g. we can assume that the state attaining the maximum is of the form $\varphi_1^{A^lX}=\sum_x p(x)\proj{\varphi_{1x}}^{A^l}\ox \proj{x}^X$.
This state with $R=\emptyset$ corresponds to the point  $(0,C_C^{(l)})$ on $\cS_G^{(l)}$.

We claim that the set $\cS_G^{(l)}$ is convex as shown in Fig.~\ref{fig: S_G}.
To see this, for two given states $\rho_1^{A^lRX}$, $\rho_2^{A^lRX}$ and $0 \leq \lambda \leq 1$ define
\begin{align}
    \rho_3^{A^lRX}=\lambda \rho_1^{A^lRX} \ox \proj{1}^Y+(1-\lambda)\rho_2^{A^lRX}\ox \proj{1}^Y. \nonumber
\end{align}
By applying the channel on the above state, we obtain 
\begin{align}
    I(R\> \rangle B^lXY)_{\sigma_3}/l&=\lambda I(R\> \rangle B^lX)_{\sigma_1}/l+
    (1-\lambda) I(R\> \rangle B^lX)_{\sigma_2}/l \nonumber\\
    I(B^l:XY)_{\sigma_3}/l& \geq \lambda I(B^l:X)_{\sigma_1}/l+
    (1-\lambda) I(B^l:X)_{\sigma_2}/l. \nonumber
\end{align}

The set $\cS_G^{(l)}$ is illustrated in Fig.~\ref{fig: S_G}. For states
$\rho^{A^lRX}$ giving rise to the points on the line $R_C=\frac{S(C)_{\omega}}{S(Q|C)_{\omega}} R_Q$
the ratio 
$\frac{I(B^l:X)_{\sigma}}{I(R\> \rangle B^lX)_{\sigma}}=\frac{S(C)_{\omega}}{S(Q|C)_{\omega}}$
holds.

\section{Miscellaneous definitions and facts}\label{sec: Miscellaneous definitions and facts}
\begin{definition}\label{def:typicality}
Let $X$ be a random variable with probability distribution $\{p(x)\}_{x \in \cX}$. For $\delta >0$, define the typical set as
\begin{align} 
    \mathcal{T}_{\delta,X}^n:=\left\{x^n=x_1 x_2 \ldots x_n : \forall x \abs{N(x|x^n)-np(x)} \leq n\delta\right\}, \nonumber
\end{align}
where $N(x|x^n)$ counts the number of occurrences of $x$ in the
sequence $x^n$ of length $n$.
Moreover, let $Y$ be a random variable correlated with $X$, with the joint probability distribution $\{p(x,y)\}_{(x,y)\in(\cX,\cY)}$, and 
let $p(y|x)$ denote the conditional probability distribution satisfying $p(x,y)=p(x)p(y|x)$ for all $x$ and $y$.
For a typical $x^n \in \mathcal{T}_{\delta,\rho }^n$, 
define the
set of conditionally typical sequences (with $x^n$) by
\begin{align} 
    \mathcal{T}_{\delta,Y|X}^n(x^n):=\left\{y^n=y_1 y_2 \ldots y_n : \forall x,y \quad\abs{N(x,y|x^n,y^n)-p(y|x)N(x|x^n)} \leq n\delta\right\}. \nonumber
\end{align}

\noindent For state $\rho^A$ with spectral decomposition $\rho^A=\sum_x p(x)\proj{e_x}^A$ and constant $\delta$, define the typical projector of $\rho^{\ox n}$   as
\begin{align*}
    \Pi^n_{\delta ,\rho }:=\sum_{x^n \in \mathcal{T}_{\delta,\rho}^n} \proj{e_{x^n}}
    =\sum_{x^n \in \mathcal{T}_{\delta,\rho}^n} \proj{e_{x_1}} \ox \proj{e_{x_2}} \ox \cdots
    \proj{e_{x_n}}.
\end{align*}
Also, Let  $\sigma^{XB}=\sum_x p(x)\proj{x}^X \ox \sigma_x^B$ be a cq state with the spectral decomposition of $\sigma_x^B$ as $\sigma_x^B=\sum_y p(y|x) \proj{w_{x,y}}^B$. 
For a typical $x^n \in \mathcal{T}_{\delta,\rho }^n$, define the conditional typical projector of $\sigma_{x^n}=\sigma_{x_1} \ox \cdots \sigma_{x_n}$  as
\begin{align*}
    \Pi^n_{\delta ,\sigma_{x^n} }:=\sum_{y^n \in\mathcal{T}_{\delta,Y|X}^n(x^n)} \proj{w_{x^n,y^n}}
    =\sum_{y^n \in\mathcal{T}_{\delta,Y|X}^n(x^n)} \proj{w_{x_1,y_1}} \ox \proj{w_{x_2,y_2}} \ox \cdots
    \proj{w_{x_n,y_n}}.
\end{align*}

\end{definition}

\begin{lemma}[{Cf.~\cite{csiszar_korner_2011}}]
\label{lemma:typicality properties}
For any $\delta >0$ the following properties hold
\begin{align*}
    \Pr(x^n \in \mathcal{T}_{\delta,X}^n) = \Tr(\rho^{\ox n}\Pi^n_{\delta ,\rho })&\geq 1-\epsilon, \\
    \Tr(\sigma_{x^n}\Pi^n_{\delta ,\sigma_{x^n} })&\geq 1-\epsilon,\\
    2^{n[S(A)_{\rho}+c\delta]} &\geq \Tr(\Pi^n_{\delta ,\rho }),\\
     2^{n[S(B|X)_{\sigma}+c\delta]} &\geq \Tr(\Pi^n_{\delta ,\sigma_{x^n}}),
\end{align*}
for some constant $c$ and  $\epsilon = 2^{-n c' \delta^n}$ for some constant $c'$. 
\end{lemma}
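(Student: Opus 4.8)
The plan is to derive all four claims from elementary large-deviation and type-counting estimates, exactly as in \cite{csiszar_korner_2011}; what follows indicates the reductions. For the first identity, observe that $\rho^{\ox n}$ is diagonal in the product eigenbasis $\{\ket{e_{x^n}}\}$ with eigenvalue $p(x^n)=\prod_i p(x_i)$, so $\Tr(\rho^{\ox n}\Pi^n_{\delta ,\rho })=\sum_{x^n\in\mathcal{T}^n_{\delta,X}}p(x^n)=\Pr(X^n\in\mathcal{T}^n_{\delta,X})$ for $X^n$ i.i.d.\ $\sim p$. For each fixed letter $x\in\cX$ the random variable $N(x|X^n)$ is a sum of $n$ i.i.d.\ Bernoulli$(p(x))$ variables, so by the Chernoff--Hoeffding bound $\Pr\!\left(|N(x|X^n)-np(x)|>n\delta\right)\le 2\cdot 2^{-nc'\delta^2}$; a union bound over the fixed, finite alphabet $\cX$ then gives $\Pr(X^n\notin\mathcal{T}^n_{\delta,X})\le\epsilon$ of the stated exponentially small form.

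For the second inequality I would condition on $x^n$. The state $\sigma_{x^n}=\sigma_{x_1}\ox\cdots\ox\sigma_{x_n}$ is diagonal in $\{\ket{w_{x^n,y^n}}\}$ with eigenvalue $\prod_i p(y_i|x_i)$, hence $\Tr(\sigma_{x^n}\Pi^n_{\delta ,\sigma_{x^n}})=\Pr\!\left(Y^n\in\mathcal{T}^n_{\delta,Y|X}(x^n)\right)$, where conditionally $Y_i\sim p(\cdot|x_i)$ independently. Grouping positions by the value of $x_i$, inside the block $\{i:x_i=x\}$ (of size $N(x|x^n)\le n$) the variables $Y_i$ are i.i.d.\ $\sim p(\cdot|x)$, so Chernoff--Hoeffding again controls the deviation of $N(x,y|x^n,y^n)$ from its mean $p(y|x)N(x|x^n)$; a union bound over the finitely many pairs $(x,y)$ closes the argument.

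The two dimension bounds are pure counting. Since $\Pi^n_{\delta ,\rho }$ is a sum of $|\mathcal{T}^n_{\delta,X}|$ mutually orthogonal rank-one projectors, $\Tr(\Pi^n_{\delta ,\rho })=|\mathcal{T}^n_{\delta,X}|$, and for every $x^n\in\mathcal{T}^n_{\delta,X}$ one has $-\log p(x^n)=\sum_x N(x|x^n)\log\frac1{p(x)}\le\sum_x(np(x)+n\delta)\log\frac1{p(x)}=n(H(X)+c\delta)$ with $c:=\sum_{x:\,p(x)>0}\log\frac1{p(x)}$; thus each typical $p(x^n)\ge 2^{-n(H(X)+c\delta)}$, and $\sum_{x^n\in\mathcal{T}^n_{\delta,X}}p(x^n)\le1$ forces $|\mathcal{T}^n_{\delta,X}|\le 2^{n(H(X)+c\delta)}=2^{n(S(A)_\rho+c\delta)}$, using $H(X)=S(A)_\rho$ since $\rho=\sum_x p(x)\proj{e_x}$. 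The last inequality is the conditional version: $\Tr(\Pi^n_{\delta ,\sigma_{x^n}})=|\mathcal{T}^n_{\delta,Y|X}(x^n)|$, and for $y^n\in\mathcal{T}^n_{\delta,Y|X}(x^n)$ we estimate $-\log p(y^n|x^n)=\sum_{x,y}N(x,y|x^n,y^n)\log\frac1{p(y|x)}\le\sum_{x,y}\!\left(p(y|x)N(x|x^n)+n\delta\right)\log\frac1{p(y|x)}\le n(S(B|X)_\sigma+c'\delta)$, where the last step also uses $N(x|x^n)\le n(p(x)+\delta)$ (i.e.\ that $x^n$ is typical) together with $S(B|X)_\sigma=\sum_x p(x)\,H(Y\,|\,X=x)$; summing $p(y^n|x^n)\le1$ over the conditionally typical $y^n$ gives the bound.

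The argument has no substantive obstacle: the only point requiring care is to fix, once and for all, a single constant $c$ (and $c'$) depending only on the fixed distributions $p(x)$ and $p(y|x)$ that serves uniformly across all four statements and through the chains of inequalities above; as this is entirely routine and carried out in detail in \cite{csiszar_korner_2011}, I would cite that reference for the remaining bookkeeping.
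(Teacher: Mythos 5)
The paper gives no proof of this lemma; it is stated as a known fact with a pointer to \cite{csiszar_korner_2011}. Your argument is the standard one from that reference (diagonalize $\rho^{\ox n}$ and $\sigma_{x^n}$ in the product eigenbasis to reduce the trace bounds to classical typicality, Chernoff--Hoeffding plus a union bound over the finite alphabet for the probability bounds, and the usual $p(x^n)\geq 2^{-n(H+c\delta)}$ counting argument for the two dimension bounds), and it is correct. Note only that your derivation yields $\epsilon$ of the form $2^{-nc'\delta^2}$, which is the standard exponent; the lemma's stated $\epsilon=2^{-nc'\delta^n}$ appears to be a typo for exactly this.
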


\begin{lemma}[{Fannes~\cite{Fannes1973}; Audenaert~\cite{Audenaert2007}}]
\label{Fannes-Audenaert inequality}
Let $\rho$ and $\sigma$ be two states on Hilbert space 
$A$ with trace distance 
$\frac12\|\rho-\sigma\|_1 \leq \epsilon$, then
\begin{align*}
  |S(\rho)-S(\sigma)| \leq \epsilon\log |A| + h(\epsilon),
\end{align*}
where $h(\epsilon)=-\epsilon \log \epsilon -(1-\epsilon)\log (1-\epsilon)$ is the binary entropy.
\end{lemma}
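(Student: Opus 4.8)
The plan is to reduce the operator inequality to a purely classical optimization over eigenvalue distributions and then to solve that optimization. Since the von Neumann entropy depends only on the spectrum, write $p=(p_1\geq p_2\geq\cdots)$ and $q=(q_1\geq q_2\geq\cdots)$ for the eigenvalues of $\rho$ and $\sigma$ listed in decreasing order, so that $S(\rho)=H(p)$ and $S(\sigma)=H(q)$ with $H$ the Shannon entropy. The first step is to transfer the trace-distance hypothesis to the spectra: by Mirsky's theorem (equivalently, the trace-norm instance of the Lidskii--Wielandt eigenvalue inequalities), sorting both spectra in the same order cannot increase their $\ell_1$ distance relative to the operators, i.e. $\sum_i |p_i-q_i|\leq \|\rho-\sigma\|_1\leq 2\epsilon$. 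This eliminates the eigenvectors entirely and reduces the claim to the statement that for two probability vectors $p,q$ on $d=|A|$ letters with $\frac12\sum_i|p_i-q_i|\leq\epsilon$ one has $|H(p)-H(q)|\leq \epsilon\log(d-1)+h(\epsilon)\leq \epsilon\log|A|+h(\epsilon)$.

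The second step is to set up and solve this classical extremal problem. By the symmetry $p\leftrightarrow q$ it suffices to upper bound $H(p)-H(q)$, and since $t\mapsto t\log(d-1)+h(t)$ is increasing on $[0,1-1/d]$ one may assume the $\ell_1$ constraint is saturated, $\frac12\sum_i|p_i-q_i|=\epsilon$. I would then maximize $H(p)-H(q)=\sum_i[\eta(p_i)-\eta(q_i)]$, with $\eta(x)=-x\log x$, over probability vectors subject to $\sum_i p_i=\sum_i q_i=1$ and this fixed distance. The content of Audenaert's sharpening is that, using concavity of $\eta$ together with permutation symmetry and a coordinate-merging argument, the extremal configuration reduces to a one-parameter canonical family; concretely the maximum is attained by taking $q$ a point mass and obtaining $p$ from it by redistributing mass $\epsilon$ uniformly over the remaining $d-1$ coordinates, for which $H(q)=0$ and $H(p)=\epsilon\log(d-1)+h(\epsilon)$. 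This yields the sharp value $\epsilon\log(d-1)+h(\epsilon)$, and relaxing $\log(d-1)$ to $\log|A|$ gives the stated bound.

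The main obstacle is precisely this constrained optimization: turning the maximization of $H(p)-H(q)$ under an $\ell_1$ constraint into the closed form $\epsilon\log(d-1)+h(\epsilon)$ requires the structural reduction of the optimizer, which is the non-trivial part of Audenaert's argument, whereas the spectral reduction of the first paragraph is standard matrix analysis. For the purposes of this paper the sharp constant is inessential: every use of the lemma enters only through the error term $\delta(m,\epsilon)$, which must merely vanish as $\epsilon\to 0$. One could therefore instead run the more elementary original Fannes estimate, bounding each $|\eta(p_i)-\eta(q_i)|$ via the modulus of continuity of $\eta$ and summing with $\sum_i|p_i-q_i|\leq 2\epsilon$; this gives a bound of the form $O(\epsilon\log|A|)+o(1)$ that is weaker in the constant but already suffices for all the asymptotic arguments in Sec.~\ref{sec: converse}.
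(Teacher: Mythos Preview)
The paper does not prove this lemma at all: it is stated in the appendix as a cited result from the literature (Fannes~1973 for the original continuity estimate and Audenaert~2007 for the sharp constant), and is used only as a black box through the error term $\delta(m,\epsilon)$ in Sec.~\ref{sec: converse}. There is therefore no ``paper's own proof'' to compare against.

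Your sketch is a correct outline of the standard argument. The spectral reduction via Mirsky/Lidskii--Wielandt to the classical $\ell_1$-constrained entropy problem is exactly how one removes the eigenvectors, and the extremal configuration you describe (point mass versus point mass with $\epsilon$ spread uniformly over the remaining $d-1$ letters) is precisely Audenaert's optimizer, yielding the sharp $\epsilon\log(d-1)+h(\epsilon)$. Your closing remark is also apt: since the paper only needs $\delta(m,\epsilon)\to 0$ as $\epsilon\to 0$, the cruder Fannes bound would already suffice everywhere the lemma is invoked.
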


\bibliography{capacity_references}

\end{document}